\newcommand{\bbR}{\mathbb R}
\newtheorem{theorem}{Theorem}[section]
\newtheorem{lem}{Lemma}[section]
\newtheorem{rem}{Remark}[section]
\newtheorem{prop}{Proposition}[section]
\newcounter{hypA}
\newenvironment{hypA}{\refstepcounter{hypA}\begin{itemize}
  \item[({\bf A\arabic{hypA}})]}{\end{itemize}}
\date{}
\begin{document}

\begin{center}

{\Large \textbf{A Stable Particle Filter in High-Dimensions}}

\bigskip

BY  ALEXANDROS BESKOS, DAN CRISAN, AJAY JASRA, KENGO KAMATANI, \& YAN ZHOU

{\footnotesize Department of Statistical Science,
University College London, London, WC1E 6BT, UK.}
{\footnotesize E-Mail:\,}\texttt{\emph{\footnotesize a.beskos@ucl.ac.uk}}\\
{\footnotesize Department of Mathematics,
Imperial College London, London, SW7 2AZ, UK.}\\
{\footnotesize E-Mail:\,}\texttt{\emph{\footnotesize d.crisan@ic.ac.uk}}\\
{\footnotesize Department of Statistics \& Applied Probability,
National University of Singapore, Singapore, 117546, SG.}\\
{\footnotesize E-Mail:\,}\texttt{\emph{\footnotesize staja@nus.edu.sg, stazhou@nus.edu.sg}}\\
{\footnotesize Graduate School of Engineering Science, Osaka University, Osaka, 565-0871, JP.}
{\footnotesize E-Mail:\,}\texttt{\emph{\footnotesize kamatani@sigmath.es.osaka-u.ac.jp}}\\
\end{center}

\begin{abstract}
We consider the numerical approximation of the filtering problem in high dimensions, that is, when the hidden state lies in $\mathbb{R}^d$ with $d$ large.
For low dimensional problems, one of the most popular numerical procedures for consistent inference is the class of approximations termed   particle filters or sequential Monte Carlo methods. 
However, in high dimensions, standard particle filters (e.g. the bootstrap particle filter)  can have a cost that is exponential in $d$
for the algorithm to be stable in an appropriate sense. We develop a new particle filter, called the \emph{space-time particle filter},  for a specific family of state-space models in discrete time.
This new class of particle filters provide consistent Monte Carlo estimates for any fixed $d$, as do standard particle filters. Moreover, 
we expect that the state-space particle filter will scale much better 
with $d$ than the standard filter. We illustrate this analytically 
for a model of a simple i.i.d.\@ structure and one of a Markovian structure 
in the $d$-dimensional space-direction, when we show that the algorithm 
exhibits certain stability properties as $d$ increases at a cost $\mathcal{O}(nNd^2)$, where $n$ is the time parameter
and $N$ is the number of Monte Carlo samples, that are fixed and independent of $d$. Similar results are expected to hold, under a more general structure than the i.i.d.~one. 
Our theoretical results are also supported by numerical simulations on 
practical models of complex structures. The results suggest that it is indeed possible to tackle some high dimensional filtering problems
using the space-time particle filter that standard particle filters cannot handle.
\\
\textbf{Keywords}: State-Space Models; High-Dimensions; Particle Filters.
\end{abstract}

\section{Introduction}\label{sec:intro}

We consider the numerical resolution of filtering problems  
and the estimation of the associated normalizing constants 
for state-space models. In particular, the data is modelled by a discrete time process  $\{Y_n\}_{n\geq 1}$, $Y_n\in\mathbb{R}^{d_y}$, associated to a  hidden signal modelled by a 
Markov chain $\{X_n\}_{n\geq 0}$, $X_n\in\mathbb{R}^d$; we concerned with high dimensions, i.e.~$d$ large. For simplicity, we assume that the location of the signal at time 0 is fixed and known, but the algorithm 
can easily be extended to the general case\footnote{Both the results and the arguments can be extended to unknown initial locations of the signal, i.e., to $X_0$ being a random variable. In this case we require a mechanism through which we can produce a sample from its distribution with a polynomial computational effort in the dimension of the state space.}. We will write the joint density (with respect to an appropriate dominating measure) of $(x_{1:n},y_{1:n})$
as
$$
p(x_{1:n},y_{1:n}) = \prod_{k=1}^n g(x_k,y_k)f(x_{k-1},x_{k}),
$$
for kernel functions $f,g$ and $X_0=x_0$ 
so that,  given the hidden states  $X_{1:n}=\{X_1,...,X_n\}$,
the data $Y_{1:n}=\{Y_1,...,Y_n\}$ consist of independent entries  with $Y_k$ only depending on $X_k$.
The objective is to approximate the filtering distribution $X_n|Y_{1:n}=y_{1:n}$. 
This filtering problem when $d$ is large is notoriously difficult, in many scenarios.

In general, the filter cannot be computed exactly and one often has to resort to numerical methods, for example by using particle filters (see e.g.~\cite{doucet}). Particle filters   make use of  a sequence of proposal densities and sequentially simulate from these a collection of $N>1$ samples, termed particles. In most scenarios
it is not possible to use the distribution of interest as a proposal. Therefore, one must correct for the discrepancy between proposal
and target  via importance weights. In the majority of cases of practical interest, the variance of these importance weights increases with algorithmic time. This can, to some extent, be dealt with via a  resampling procedure consisted of sampling with replacement from the current weighted samples and resetting them to $1/N$.
The variability of the weights is often measured by the effective sample size (ESS).  If $d$ is small to moderate, then particle filters can many times perform very well in the time parameter $n$ (e.g.~\cite{delmoral}). For instance, under conditions the Monte Carlo error of the estimate of the filter can be uniform with respect to the time parameter.

For some state-space models, with specific structures, particle algorithms can work well in high dimensions, or at least can be appropriately modified to do so.
We note for instance that one can set-up an effective particle filter even when $d=\infty$
 provided one assumes a finite (and small, relatively to $d$) amount of information in the likelihood (see e.g.\@ \cite{kantas} for details).  This is \emph{not} the class of problems for which we are interested in here.
In general, it is mainly the amount of information 
in the likelihood $g(x_{k},y_k)$ that determines the algorithmic challenge  rather than the dimension $d$ of the hidden space per-se (this is related to what is called 
`effective dimension' in \cite{bickel}). The function  $x_{k}\mapsto g(x_{k},y_k)$ can convey a lot of information about the hidden state, especially so in high dimensions.
 If this is the case, using the prior transition kernel 
$f(x_{k-1},x_{k})$ as proposal will be ineffective. We concentrate here on the challenging class of problems with large state  space dimension  $d$
and an amount of information in the likelihood 
that increases with $d$.
 It is then known that the standard particle filter will typically perform poorly 
in this context,
often requiring that $N=\mathcal{O}(\kappa^d)$, for some $\kappa>1$, see for instance \cite{bickel}. 
%
%
The results of \cite{bickel}, amongst others, has motivated substantial research in the literature
on particle filters in high-dimensions,  such as the recent work in \cite{rebs}
which attempts an approximate split of the $d$-dimensional state vector to confront 
the curse-of-dimensionality for importance sampling, at the cost of introducing 
difficult to quantify bias with magnitude that depends on the position along the $d$ co-ordinates. See \cite{rebs} and the references therein for some algorithms designed for high-dimensional filtering. To-date, there are few particle filtering algorithms that are:
\begin{enumerate}
\item{asymptotically consistent (as $N$ grows),}
\item{of fixed computational cost per time step (`online'),} 
\item{supported  by theoretical analysis demonstrating  a sub-exponential cost in $d$.}
\end{enumerate}
In this article we attempt to provide an algorithm which has the above properties. 

Our method develops as follows. In a general setting, we assume 
there exists  an increasing sequence of sets 
$\{\mathcal{A}_{k,j}\}_{j=1}^{\tau_{k,d}}$, 
with $\mathcal{A}_{k,1}\subset \mathcal{A}_{k,2}\subset \cdots \subset  \mathcal{A}_{k,\tau_{k,d}}= \{1:d\}$, for some integer $0<\tau_{k,d}\le d$, such that we can 
factorize:
\begin{equation}
g(x_k,y_k)f(x_{k-1},x_{k}) = \prod_{j=1}^{\tau_{k,d}} \alpha_{k,j}(y_k,x_{k-1}, x_{k}(\mathcal{A}_{k,j})), \label{eq:hmm_struc}
\end{equation}
%
%
for appropriate functions $\alpha_{k,j}(\cdot)$,
where we denote $x_{k}(\mathcal{A}) = \{x_{k}(j):j\in \mathcal{A}\}\in\mathbb{R}^{|\mathcal{A}|}$. 
As we will remark later on, this structure is not an absolutely necessary 
requirement for the subsequent
algorithm, but will clarify the ideas in the development of the method. 
%
%
Within a sequential Monte Carlo context, 
one can think of augmenting the sequence of distributions of increasing dimension $X_{1:k}|Y_{1:k}$, $1\le k \le n$, moving from $\mathbb{R}^{d(k-1)}$ to 
$\mathbb{R}^{dk}$, with intermediate laws on $\mathbb{R}^{d(k-1)+
|\mathcal{A}_{k,j}|}$,
for $j=1, \ldots, \tau_{k,d}$.
The structure in (\ref{eq:hmm_struc}) is not uncommon. 
For instance one should typically be able to obtain such a factorization for the 
prior term $f(x_{k-1},x_k)$ by marginalising over subsets 
of co-ordinates. Then, for the likelihood component $g(x_{k},y_k)$ 
this could for instance be implied when the model assumes a local dependence 
structure for the observations. 
Critically, for this approach to be effective
 it is necessary that the factorisation is 
such that will allow for a gradual introduction of the `full' likelihood term $g(x_k,y_k)$ 
along the $\tau_{k,d}$ steps. For instance, trivial choices like  
$\alpha_{k,j} = \int f(x_{k-1},x_k)dx_k(j+1:d)/\int f(x_{k-1},x_k)dx_k(j:d)$, $1\le j\le d-1$, and $\alpha_{k,d}= \Big(f(x_{k-1},x_k)/\int f(x_{k-1},x_k)dx_k(d)\Big)g(x_k,y_k)$ will be ineffective, as they only introduce the complete likelihood term in the last step.

Our contribution is based upon the idea 
that particle filters in general  work well with regards to the time parameter (they are sequential).
Thus, we will exploit the structure in (\ref{eq:hmm_struc}) to build up a
particle filter in space-time 
moving vertically along the space index; for this reason, we call the new algorithm the space-time particle filter (STPF).
 We break the $k$-th time-step  of the particle filter into
$\tau_{k,d}$ space-steps and run a system of $N$ independent particle filters for these steps.  This is similar to a tempering approach as the one 
in \cite{beskos,beskos1}, in the context of sequential Monte Carlo algorithms \cite{delm:06} for a single target probability of dimension $d$. There, the idea is to use annealing steps, interpolating between an easy to sample
distribution and the target with an $\mathcal{O}(d)$ number of steps. In the context of filtering, for the filter, say, at time 1 we break the problem of
trying to perform importance sampling in one step for a $d$-dimensional object (which typically does not perform well, as noted by \cite{bickel}) into $\tau_{1,d}$ easier steps via the particle filter along space; as the particle filter on low to moderate dimensions is
typically well behaved, one expects the proposed procedure to work well even if $d$ is large. A similar idea is used at subsequent time steps of the filter.

In the main part of the paper and in all theoretical derivations, we work under the easier to present scenario 
$\tau_{k,d}=d$ and $\mathcal{A}_{k,j}=\{1:j\}$. 
We establish that our algorithm is consistent as $N$ grows (for fixed $d$), i.e.~that one can estimate the filter with enough computational power, in a manner that is online. The we look at two simple models: a) an i.i.d.\@ scenario both in space and time, b) a Markovian model along space. In both cases, we present results indicating
that the algorithm is stable 
at a cost of $\mathcal{O}(nNd^2)$. 
%
As we remark later on, we expect this cost to be optimistic, but, we conjecture that the cost in general is no worse than polynomial in $d$. 
These claims are further supported by numerical simulations.
We stress here that there is a lot more to be investigated in terms 
of the analytical properties of the proposed algorithm to fully explore its potential,
certainly in more complex model structures than the above.
This work aims to make an important first contribution in an very significant
and challenging problem and open up several 
directions for future investigation.

This article is structured as follows. In Section \ref{sec:algo} the STPF algorithm is given. In Section \ref{sec:theory} our mathematical results are given; some proofs are housed in the Appendix.
In Section \ref{sec:numerics} our algorithm is implemented and compared to existing methodology. In Section \ref{sec:summary} the article is concluded with several remarks for future work.

\section{The Space-Time Particle Filter}\label{sec:algo}

We develop an algorithm that combines a local filter running $d$ space-step using $M_d$ particles, with a global filter making  time-steps and uses $N$ particles. 
We will establish in Section \ref{sec:theory}, that for any fixed $M_d\geq 1, d\geq 1$, the algorithm is
consistent, with respect to some estimates of interest, as $N$ grows. 
A motivation for using such an approach is that it can potentially  provide good estimates 
for expectations over the complete $d$-dimensional filtering density $X_n|Y_{1:n}=y_{1:n}$, 
whereas a standard filter with $N=1$ could exhibit path degeneracy even within 
a single time-step (for large $d$), thus providing unreliable estimates for $X_n|Y_{1:n}=y_{1:n}$.
This approach has been motivated by the island particle model of \cite{verge},
where a related method for standard particle filters (and not related with confronting the dimensionality issue) was developed,
but is not a trivial extension of it, so some extra effort is required to ensure correctness of the algorithm.
We will also explain how to set $M_d$
as a function of $d$ to ensure some stability properties  with respect to $d$ in some specific modelling scenarios.  
The notation $x_n^{i,l}(1:j)\in\mathbb{R}^j$ is adopted, with $i\in\{1,\dots,N\}$, denoting the particle, $n\geq 1$ the
discrete observation time, $1:j$ denoting dimensions $1,\dots,j$ and $l\in\{1,\dots,M_d\}$ the particle in the local system.

\subsection{Time-Step 1}
For each $i\in\{1,\dots,N\}$, the following algorithm is run. We introduce a sequence of proposal densities $q_{1,j}(x_{1}^{i,l}(j)|x_{1}^{i,l}(1:j-1),x_0)$ and will run a particle filter in space-direction that
builds up the dimension towards $x_1\in\mathbb{R}^d$. At space-step 1, one generates $M_d$-samples from $q_{1,1}$ in $\mathbb{R}$ and computes the weights
$$
G_{1,1}(x_{1}^{i,l}(1)) = \frac{\alpha_{1,1}(y_1,x_0, x_{1}^{i,l}(1))}{q_{1,1}(x_{1}^{i,l}(1)|x_0)}, \quad l\in\{1,\dots,M_d\}.
$$
The $M_d$-samples are resampled, according to their corresponding  weights. For simplicity, we will assume   we use multinomial resampling. The resampled particles are
written as $\check{x}_{1}^{i,l}(1)$.
At subsequent points $j\in\{2,\dots,d\}$  one generates $M_d$-samples from $q_{1,j}$ in $\mathbb{R}$ and computes  
$$
G_{1,j}(\check{x}_{1}^{i,l}(1:j-1),x_{1}^{i,l}(j)) = \frac{\alpha_{1,j}(y_1,x_0,\check{x}_{1}^{i,l}(1:j-1),x_{1}^{i,l}(j))}{q_{1,j}(x_{1}^{i,l}(j)|x_0,\check{x}_{1}^{i,l}(1:j-1))}, \quad l\in\{1,\dots,M_d\}.
$$
The $M_d$-samples are resampled  according to the weights. 
At the end of the 1st time-step, all the last particles are resampled, 
thus giving $\check{x}_{1}^{i,l}(1:d)$ (so that we have $N$ independent  particle systems of $M_d$ particles). The $N$ particle systems are assigned weights
\begin{equation}
\mathbf{G}_1(\check{x}_{1}^{i,1:M_d}(1:d-1),x_{1}^{i,1:M_d}(1:d)) = \prod_{j=1}^d \Big(\frac{1}{M_d}\sum_{l=1}^{M_d}G_{1,j}(\check{x}_{1}^{i,l}(1:j-1),x_{1}^{i,l}(j))\Big).
\label{eq:global_g}
\end{equation}
We then resample the $N$-particle systems according to these weights. The normalizing constant $\int_{\mathbb{R}^d}g(x_1,y_1)f(x_{0},x_1)dx_1$ can be estimated by
\begin{equation}
\frac{1}{N}\sum_{i=1}^N\mathbf{G}_1(\check{x}_{1}^{i,1:M_d}(1:d-1),x_{1}^{i,1:M_d}(1:d))\label{eq:nc1}.
\end{equation}
For $\varphi:\mathbb{R}^d\rightarrow\mathbb{R}$, the filter at time 1, 
$$
\frac{\int_{\mathbb{R}^d}\varphi(x_1)g(x_1,y_1)f(x_0,x_{1})dx_1}{\int_{\mathbb{R}^d}g(x_1,y_1)f(x_0,x_{1})dx_1}
$$ 
can be estimated by
\begin{equation}
\frac{1}{NM_d}\sum_{l=1}^{M_{d}}\sum_{i=1}^N\varphi(\check{x}_{1}^{i,l}(1:d))\label{eq:filt1}
\end{equation}
where, with some abuse of notation, we assume that $\check{x}_{1}^{i,l}(1:d)$ have been resampled according to the weights of the global filter in \eqref{eq:global_g}.
We will remark on these estimates later on.

\subsection{Time-Steps $n\geq 2$}

For each $i\in\{1,\dots,N\}$, the following algorithm is run. Introduce a sequence of proposal densities $q_{n,j}(x_{n}^{i,l}(j)|\check{x}_{n}^{i,l}(1:j-1),\check{x}_{n-1}^{i,l}(1:d))$. At step 1, one produces $M_d$-samples from $q_{n,1}$ in $\mathbb{R}$ and computes the weights
$$
G_{n,1}(\check{x}_{n-1}^{i,l}(1:d),x_{n}^{i,l}(1)) =
 \frac{\alpha_{n,1}(y_n,\check{x}_{n-1}^{i,l}(1:d),x_{n}^{i,l}(1))}{q_{n,1}(x_{n}^{i,l}(1)|\check{x}_{n-1}^{i,l}(1:d))}, \quad l\in\{1,\dots,M_d\}.
$$
The $M_d$-samples are resampled, according to the weights inclusive of the $\check{x}_{n-1}^{i,l}(1:d)$, which are denoted $\check{x}_{n-1,j}^{i,l}(1:d)$ at step $j$.
At subsequent points $j\in\{2,\dots,d\}$, one produces $M_d$-samples from $q_{n,j}$ in $\mathbb{R}$ and computes the weights, for $l\in\{1,\dots,M_d\}$
$$
G_{n,j}(\check{x}_{n-1,j-1}^{i,l}(1:d),\check{x}_{n}^{i,l}(1:j-1),x_{n}^{i,l}(j)) = \frac{\alpha_{n,j}(y_n,\check{x}_{n-1,j-1}^{i,l}(1:d),\check{x}_{n}^{i,l}(1:j-1),x_{n}^{i,l}(j))}{q_{n,j}(x_{n}^{i,l}(j)|\check{x}_{n-1,j-1}^{i,l}(1:d),\check{x}_{n}^{i,l}(1:j-1))}. 
$$
The $M_d$-samples are resampled according to the weights.
At the end of the time step, the $N$ particle systems are assigned weights
\begin{gather}
\mathbf{G}_n(\check{x}_{n-1,1:d-1}^{i,1:M_d}(1:d),\check{x}_{n}^{i,1:M_d}(1:d-1),x_{n}^{i,1:M_d}(1:d)) =\nonumber \\
\prod_{j=1}^d \Big(\frac{1}{M_d}\sum_{l=1}^{M_d}
G_{n,j}(\check{x}_{n-1,j-1}^{i,l}(1:d),\check{x}_{n}^{i,l}(1:j-1),x_{n}^{i,l}(j))\Big).\label{eq:abc}
\end{gather}
We then resample the $N$-particle systems according to the weights. The normalizing constant 
$$
\int_{\mathbb{R}^d}\Big(\prod_{k=1}^n g(x_k,y_k)f(x_{k-1},x_k)\Big)dx_{1:n}
$$ 
can be estimated by
\begin{equation}
\prod_{k=1}^n \Big(\frac{1}{N}\sum_{i=1}^N\mathbf{G}_k(\check{x}_{k-1,1:d-1}^{i,l}(1:d),\check{x}_{k}^{i,1:M_d}(1:d-1),x_{k}^{i,1:M_d}(1:d))\Big)\label{eq:ncn}.
\end{equation}
For $\varphi:\mathbb{R}^d\rightarrow\mathbb{R}$, the filter at time $n$, 
$$
\frac{\int_{\mathbb{R}^{nd}}\varphi(x_n)\prod_{k=1}^n g(x_k,y_k)f(x_{k-1},x_k)dx_{1:n}}{
\int_{\mathbb{R}^{nd}}\prod_{k=1}^n g(x_k,y_k)f(x_{k-1},x_k)dx_{1:n}}
$$ 
can be estimated by (assuming again that $\check{x}_{n}^{i,l}(1:d)$  
denote the values after resampling according to the global weights in (\ref{eq:abc}))
\begin{equation}
\frac{1}{NM_d}\sum_{l=1}^{M_{d}}\sum_{i=1}^N\varphi(\check{x}_{n}^{i,l}(1:d))\label{eq:filtn}.
\end{equation}

\subsection{Remarks}

In terms of the estimate of the filter \eqref{eq:filt1}, \eqref{eq:filtn}, we expect there to be a \emph{path degeneracy} effect for the local filters (see \cite{doucet}), especially for $d$ large, due to resampling forcing common ancestries for different
particles. 
For instance, in a worst case scenario, for a given $i\in\{1,\dots,N\}$, only one of the $M_d$ samples will be a good representation of the target filtering distribution 
at current time-step. However, one can still average over all $M_d$-samples as we have done; one can also select a single sample for estimation, if preferred. In addition, in a general setting the form of the weights $G_{n,j}$, $n\geq 2$, depends upon $\check{x}_{n-1}^{i,l}(1:d)$; there may be an additional path degeneracy effect with these samples. To an extent, this can be alleviated using dynamic resampling (e.g.~\cite{delmoral_resampling} and the references therein);
we will discuss how path degeneracy could be potentially dealt with in Section \ref{sec:path_degen} below. In addition, in some scenarios (see e.g.~\cite{poy}) the path degeneracy can betaken care of if the number of samples is quadratic in the time parameter; i.e.~$M_d=\mathcal{O}(d^2)$.

Note that we have assumed that
$$
g(x_k,y_k)f(x_{k-1},x_k) = \prod_{j=1}^d \alpha_{k,j}(y_k,x_{k-1},x_{k}(1:j)).
$$
However, this need not be the case. All one needs is a collection of functions $\alpha_{k,j}$, such that the variance (w.r.t.~the simulated algorithm) of 
\begin{equation}
\frac{g(x_k,y_k)f(x_{k-1},x_k)}{\prod_{j=1}^d \alpha_{k,j}(y_k,x_{k-1},x_{k}(1:j))}\label{eq:correct_normal}
\end{equation}
is reasonable, especially as $d$ grows. 
Then, the particles obtained at the end of the $k$-th time-step under 
$\prod_{j=1}^d \alpha_{k,j}(y_k,x_{k-1}, x_{k}(1:j))$ can 
be used as proposals with an importance sampler targeting $g(x_k,y_k)f(x_{k-1},x_k)$,
with the above ratio giving the relevant weights.
 In such a scenario, we expect the algorithm to perform reasonably well, even for large $d$; however, the construction of such functions
$\alpha_{k,j}$ may not be trivial in general.

The algorithm is easily parallelized over $N$, at least in-between global resampling times. We also note that the idea of using a particle filter within a particle filter has been used, for example, in \cite{johansen}.
The algorithm can also be thought of as a novel generalization of the island particle filter \cite{verge}. In our algorithm, one runs an entire particle filter for $d$ time steps, as the local filter, whereas, it is only
one step in \cite{verge}; as we shall see in Section \ref{sec:theory}, this appears to be critical in the high-dimensional filtering context. We also remark that, unlike the method described  in 
\cite{rebs}, the algorithm in this is article is consistent as $N$ grows.

\subsection{Dealing with Path Degeneracy}\label{sec:path_degen}

As mentioned above, the path degeneracy effect may limit the success of the proposed algorithm. 
We expect it to be of use when $d$ is maybe too large for the standard particle filter, but not overly large.
Path degeneracy can in principle be dealt with, at an increased computational cost, in the following way; in such cases one can run the algorithm simply with $N=1$.
At time 1, one may apply an Markov chain Monte Carlo (MCMC) `mutation' kernel for each local particle at each dimension step, where the invariant target density is proportional to ($j\in\{1,\dots,d\}$)
$$
\prod_{k=1}^j \alpha_{1,k}(y_1,x_{0}, x_{1}(1:k)).
$$
At subsequent time steps $n$, one uses the marginal particle filter (e.g.~\cite{poy}) and targets, up-to proportionality  for each local particle at each space-step 
$$
\sum_{l=1}^{M_d} \prod_{k=1}^j \alpha_{n,k}(y_n,\check{x}_{n-1}^{i,l}(1:d), x_{n}(1:k))
$$
also using MCMC steps with the above invariant density. Notice that the above 
expression is a Monte Carlo estimator the (unnormalised) marginal distribution of $x_{n}(1:j)$ under the model specified by the $\alpha_{n,k}$ functionals. 
Assuming an effective design of the MCMC step, 
the path degeneracy effect can be overcome, and each time-step $n$ will still has fixed (but increased) computational complexity.
The cost of this modified algorithm, assuming the cost of computing $\alpha_{n,k}$ is $\mathcal{O}(1)$ for each $n,k$,  is $\mathcal{O}(nNM_d^2d^2)$; so long as $M_d$ is polynomial in $d$, this is still a reasonable algorithm for high-dimensional problems. We note that, even though we do not analyze this
algorithm mathematically, we will implement it.

\section{Theoretical Results}\label{sec:theory}

\subsection{Consistency of Space-Time Sampler}

We will now establish that if $d,M_d\geq 1$ are fixed then   STPF will provide consistent estimates of quantities of interest of the true filter as $N$ grows. Indeed, one can prove many results
about the algorithm in this setting, such as finite-$N$ bounds and central limit theorems; however, this is not the focus of this work and the consistency result
is provided to validate the use of the algorithm. Throughout, we condition on a fixed data record and
we will suppose that 
$$
\sup_{x\in\mathbb{R}^j}|G_{1,j}(x)| <+\infty, \sup_{x\in\mathbb{R}^{d+j}}|G_{n,j}(x)| <+\infty, \quad n\geq 2.
$$
Below $\rightarrow_{\mathbb{P}}$ denotes convergence in probability as $N$ grows, where $\mathbb{P}$ denotes the law under the simulated algorithm. 
We
denote by $\mathcal{B}_b(\mathbb{R}^d)$ the class of bounded and measurable real-valued functions on $\mathbb{R}^d$.
We will write, for $n\geq 1$
$$
\pi_n(\varphi) := \frac{\int_{\mathbb{R}^{nd}}\varphi(x_n)\prod_{k=1}^n g(y_k|x_k)f(x_k|x_{k-1})dx_{1:n}}{
\int_{\mathbb{R}^{nd}}\prod_{k=1}^n g(y_k|x_k)f(x_k|x_{k-1})dx_{1:n}}
$$
and
$$
p(y_{1:n}) = \int_{\mathbb{R}^{nd}}\Big(\prod_{k=1}^n g(y_k|x_k)f(x_k|x_{k-1})\Big)dx_{1:n},
$$
so that $\pi_n$ corresponds to the filtering density of $X_n|y_{1:n}$.
The proof of the following Theorem is given in Appendix \ref{sec:prf_consis}. It ensures that the $N$ particle systems correspond to a standard particle filter
on an enlarged state space; 
once this is established standard consistency results for particle filters on general state spaces (e.g.~\cite{delmoral}) 
will complete the proof. We denote by $\rightarrow_{\mathbb{P}}$ convergence in probability.

\begin{theorem}\label{theo:consis}
Let $d,M_d\geq 1$ be fixed and let $\varphi\in\mathcal{B}_b(\mathbb{R}^d)$. Then we have for any $n\geq 2$
\begin{eqnarray*}
\frac{1}{NM_d}\sum_{l=1}^{M_{d}}\sum_{i=1}^N\varphi(\check{x}_{1}^{i,l}(1:d)) & \rightarrow_{\mathbb{P}} & 
\pi_1(\varphi),\\
\frac{1}{N}\sum_{i=1}^N\mathbf{G}_1(\check{x}_{1}^{i,1:M_d}(1:d-1),x_{1}^{i,1:M_d}(1:d)) & \rightarrow_{\mathbb{P}} & p(y_{1}),\\
\frac{1}{NM_d}\sum_{l=1}^{M_{d}}\sum_{i=1}^N\varphi(\check{x}_{n}^{i,l}(1:d)) & \rightarrow_{\mathbb{P}} & \pi_n(\varphi),\\
\prod_{k=1}^n \Big(\frac{1}{N}\sum_{i=1}^N\mathbf{G}_k(\check{x}_{k-1,1:d-1}^{i,l}(1:d),\check{x}_{k}^{i,1:M_d}(1:d-1),x_{k}^{i,1:M_d}(1:d))\Big)
 & \rightarrow_{\mathbb{P}} & p(y_{1:n}).
\end{eqnarray*}
\end{theorem}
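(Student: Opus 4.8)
The plan is to follow the route signposted just before the statement: exhibit the $N$ interacting local systems as a single standard particle filter on a suitably enlarged state space, and then invoke the classical consistency theory for particle filters on general state spaces (e.g.\ \cite{delmoral}). The only genuinely new content is the bookkeeping needed to identify the marginals of the resulting Feynman--Kac flow with the true filter and normalizing constant.

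First I would fix the enlarged state. For a single global index $i$, let $\xi_n := \check{x}_n^{i,1:M_d}(1:d)\in(\bbR^d)^{M_d}$ denote the collection of the $M_d$ resampled local particles produced by one sweep of the local filter over the space steps $j=1,\dots,d$; for $n\ge 2$ the incoming collection $\check{x}_{n-1}^{i,1:M_d}(1:d)$ is carried along inside the local weights, so the local sweep at time $n$ (propose from $q_{n,j}$, reweight by $G_{n,j}$, locally resample, for $j=1,\dots,d$) defines a Markov kernel $\mathcal{M}_n(\xi_{n-1},d\xi_n)$ on the enlarged space. Taking the potential there to be $\mathbf{G}_n$ as defined in \eqref{eq:global_g}, \eqref{eq:abc}, the evolution of the $N$ global systems --- local sweep followed by global resampling proportional to $\mathbf{G}_n$ --- is \emph{exactly} a standard particle filter with $N$ particles associated to the Feynman--Kac pair $(\mathcal{M}_n,\mathbf{G}_n)$. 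The standing assumption $\sup|G_{n,j}|<\infty$ makes each $\mathbf{G}_n$ bounded, being a product of $d$ arithmetic means of bounded weights, so the hypotheses of the standard consistency results hold.

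With this identification in place, the classical theory gives, as $N\to\infty$, that $\eta_n^N(\Phi)\rightarrow_{\mathbb{P}}\eta_n(\Phi)$ for bounded $\Phi$ on the enlarged space and $\prod_{k=1}^n\big(\tfrac1N\sum_{i=1}^N\mathbf{G}_k(\xi_k^i)\big)\rightarrow_{\mathbb{P}}\gamma_n(\mathbf{1})$, where $\eta_n,\gamma_n$ denote the normalized and unnormalized Feynman--Kac measures of $(\mathcal{M}_n,\mathbf{G}_n)$. Choosing $\Phi=\Phi_\varphi$ with $\Phi_\varphi(\xi)=\tfrac1{M_d}\sum_{l=1}^{M_d}\varphi(\xi^l)$ reduces the four claims to the two identities $\gamma_n(\mathbf{1})=p(y_{1:n})$ and $\eta_n(\Phi_\varphi)=\pi_n(\varphi)$.

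The crux, and the step I expect to be the main obstacle, is the joint unbiasedness identity
$$\gamma_n(\Phi_\varphi)=\Exp\Big[\tfrac1{M_d}\sum_{l=1}^{M_d}\varphi\big(\check{x}_n^{l}(1:d)\big)\prod_{k=1}^n\mathbf{G}_k\Big]=\int_{\bbR^{nd}}\varphi(x_n)\prod_{k=1}^n g(x_k,y_k)f(x_{k-1},x_k)\,dx_{1:n},$$
from which both required identities follow ($\varphi\equiv1$ gives $\gamma_n(\mathbf{1})=p(y_{1:n})$, and dividing then yields $\eta_n(\Phi_\varphi)=\pi_n(\varphi)$). I would prove this by a double induction. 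The inner induction is over the space index $j=1,\dots,d$ within a single time step: using the factorization $g\,f=\prod_{j=1}^d\alpha_{k,j}$ together with the elementary fact that multinomial resampling preserves the expectation of a weighted empirical average, one shows that a single local sweep is conditionally unbiased for $\int(\cdot)\prod_{j=1}^d\alpha_{k,j}\,dx_k$ given the incoming collection. The outer induction is over time $n$, propagating this conditional unbiasedness through the tower property. The delicate points are that the time-$n$ local weights $G_{n,j}$ depend on the \emph{entire} incoming collection $\check{x}_{n-1}^{i,1:M_d}(1:d)$ rather than on a single state, so the conditional statement must be phrased at the level of the empirical measure carried by $\xi_{n-1}$; and that the local resampling performed at each of the $d$ space steps must be shown not to destroy unbiasedness, which amounts to applying the resampling identity nested $d$ times. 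Once the displayed identity is established, the four convergence statements are immediate.
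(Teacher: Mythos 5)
Your proposal follows essentially the same route as the paper: lift the $N$ interacting local systems to a standard $N$-particle filter for the Feynman--Kac pair given by the local-sweep kernel and the potential $\mathbf{G}_n$, then identify the enlarged-space unnormalized measures with the true filter via the conditional unbiasedness of one local sweep (your "joint unbiasedness identity" is exactly the paper's key relation $\mathbf{M}_p(\mathbf{G}_p\boldsymbol{\phi})=\eta_{p-1,d+1}^{M_d}(f_p(g_p\phi))$, which the paper obtains by citing the standard unbiasedness theorem for unnormalized particle estimates rather than re-proving it by your inner induction over space steps). The argument is correct, including your handling of the delicate point that the time-$(n-1)$ states must be carried (and jointly resampled) inside the enlarged state.
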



\begin{rem}
The proof establishes that also $\frac{1}{N}\sum_{i=1}^N\varphi(\check{x}_{1}^{i,1}(1:d))$ can be used as an estimator for the filter; this may be more effective than the estimator given in the statement of the Theorem, due to the path degeneracy effect mentioned earlier.
In addition, one can assume the context described in (\ref{eq:correct_normal})
with the target not having a product structure, but the weights in (\ref{eq:correct_normal}) have controlled variance. 
 Even in this more general case one can the follow the arguments in the proof, to obtain consistency in that case (assuming the expression in \eqref{eq:correct_normal} is upper-bounded).
\end{rem}

\subsection{Stability in High-Dimensions for i.i.d.~Model}

We now come to the main objective of our theoretical analysis. We will set $N$ as fixed and consider the algorithm as $d$ grows. In order to facilitate our analysis, we will
consider approximating a probability, with density proportional to
$$
\prod_{k=1}^n\prod_{j=1}^d \alpha(x_{k}(j)).
$$
We will use the STPF with proposals $q_{n,j}(x_{n,j}|x_{n-1}(1:d),x_n(1:j)) = q(x_n(j))$.
In the case of a state-space model, this would correspond to
$$
g(x_k,y_k)f(x_{k-1},x_k) = \prod_{j=1}^d \alpha(x_{k}(j)).
$$
which would seldom occur in a real scenario. However, analysis in this context is expected to be informative for more complex scenarios as
in the work of \cite{beskos}. Note that, because of the loss of dependence on subsequent observation times, we expect that any complexity analysis with respect to $d$ to be slightly over-optimistic;
as noted the path degeneracy effect is expected to play a role in this algorithm 
in general.

We will consider the relative variance of the standard estimate of the normalizing constant  $p(y_{1:n})$, given for instance in Theorem \ref{theo:consis} which now writes as
\begin{align}
p^{N,M_d}(y_{1:n}) &= \prod_{k=1}^n \frac{1}{N}\sum_{i=1}^N \prod_{j=1}^d\frac{1}{M_d}\sum_{l=1}^{M_d}\frac{\alpha(x_k^{i,l}(j))}{q(x_k^{i,l}(j))} \nonumber \\
& \equiv  \prod_{k=1}^n \frac{1}{N}\sum_{i=1}^N \mathbf{G}_k(x_k^{i,1:M_d}(1:d))
\label{eq:nc_est_hd}.
\end{align}
The proof of the following result is given in Appendix \ref{sec:rv_res}.
Note that due to the i.i.d.~structure along time and space, all variables 
$x_k^{i,l}(j)$ can be assumed i.i.d.~from $q(\cdot)$.

\begin{prop}\label{prop:rv_res}
Assume that $$\frac{\int\alpha(x)^2/q(x)dx}{(\int\alpha(x)dx)^2}<+\infty,$$ then
$$
\mathbb{E}\Big[\Big(\frac{p^{N,M_d}(y_{1:n})}{p(y_{1:n})}-1\Big)^2\Big] =
 \Big(\frac{1}{N}\Big(\frac{1}{M_d}\frac{\int\alpha(x)^2/q(x)dx}{(\int\alpha(x)dx)^2} + \frac{M_d-1}{M_d}\Big)^d + \frac{N-1}{N}\Big)^n - 1.
$$
\end{prop}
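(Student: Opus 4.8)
The plan is to exploit the complete i.i.d.\ structure — across both time $k$ and space $j$, as well as across the particle indices $i$ and $l$ — to reduce the computation of the relative second moment to a product of scalar integrals. First I would observe that, under the stated proposal choice and the product form of the target, the estimator in \eqref{eq:nc_est_hd} factorises over the $n$ time steps into a product of $n$ i.i.d.\ random variables, each of the form $\frac{1}{N}\sum_{i=1}^N \mathbf{G}_k(x_k^{i,1:M_d}(1:d))$. Since all the $x_k^{i,l}(j)$ are i.i.d.\ draws from $q$, and since $p(y_{1:n}) = \big(\int \alpha(x)\,dx\big)^{nd}$, the ratio $p^{N,M_d}(y_{1:n})/p(y_{1:n})$ is itself a product over $k$ of $n$ i.i.d.\ copies of a single random variable, say $Z_k := \frac{1}{N}\sum_{i=1}^N \prod_{j=1}^d \frac{1}{M_d}\sum_{l=1}^{M_d} W^{i,l}(j)$, where $W^{i,l}(j) := \alpha(x_k^{i,l}(j))/\big(q(x_k^{i,l}(j))\int\alpha\big)$ are i.i.d.\ with mean $1$.

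The key computation is then $\mathbb{E}\big[(p^{N,M_d}/p - 1)^2\big] = \big(\mathbb{E}[Z^2]\big)^n - 1$, using that $\mathbb{E}[Z]=1$ (each $W$ has mean one) and that the $n$ factors are independent. So the task reduces to evaluating $\mathbb{E}[Z^2]$ for a single time step. I would expand
\[
\mathbb{E}[Z^2] = \frac{1}{N^2}\sum_{i=1}^N\sum_{i'=1}^N \mathbb{E}\Big[\prod_{j=1}^d \Big(\tfrac{1}{M_d}\sum_{l}W^{i,l}(j)\Big)\Big(\tfrac{1}{M_d}\sum_{l'}W^{i',l'}(j)\Big)\Big].
\]
Splitting into the $N$ diagonal terms ($i=i'$) and the $N(N-1)$ off-diagonal terms ($i\ne i'$) is the crux. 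For $i\ne i'$, the two inner averages are independent and each has expectation $1$, so each off-diagonal term contributes $1$, giving the $\frac{N-1}{N}$ summand.

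The main work is the diagonal term $i=i'$. Here I would use independence across the space index $j$ to factor the expectation as $\prod_{j=1}^d \mathbb{E}\big[\big(\frac{1}{M_d}\sum_{l=1}^{M_d} W^{l}(j)\big)^2\big]$. For a single $j$, expanding the square of the inner average and splitting the $M_d^2$ pairs into $M_d$ diagonal pairs ($l=l'$, each contributing $\mathbb{E}[W^2] = \frac{\int \alpha^2/q}{(\int\alpha)^2}$) and $M_d(M_d-1)$ off-diagonal pairs ($l\ne l'$, each contributing $1$ by independence and unit mean) yields $\frac{1}{M_d}\frac{\int\alpha^2/q}{(\int\alpha)^2} + \frac{M_d-1}{M_d}$. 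Raising to the $d$-th power and collecting the diagonal contribution $\frac{1}{N}\big(\cdots\big)^d$ with the off-diagonal $\frac{N-1}{N}$, then raising the total to the $n$-th power and subtracting $1$, gives the claimed formula. The only obstacle worth flagging is purely bookkeeping: keeping the three layers of diagonal/off-diagonal decomposition (over $i$, then over $l$) straight, and confirming that the finiteness hypothesis $\frac{\int\alpha^2/q}{(\int\alpha)^2}<\infty$ is exactly what guarantees $\mathbb{E}[W^2]<\infty$ so that every expectation above is finite and the factorisations are legitimate.
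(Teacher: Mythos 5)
Your proposal is correct and follows essentially the same route as the paper's proof: exploit the i.i.d.\ structure to write the squared relative error as $(\mathbb{E}[Z^2])^n-1$ for a single time-step variable $Z$, split the $N^2$ particle-system pairs into diagonal and off-diagonal parts to get $\frac{1}{N}(\mathbb{E}[X^2])^d+\frac{N-1}{N}$, and evaluate $\mathbb{E}[X^2]$ for a single coordinate by the analogous split over the $M_d$ local particles. The paper presents the same three identities in a more condensed form, but the underlying decomposition is identical.
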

\begin{rem}
The case $M_d=1$ corresponds, in some sense, to the standard particle filter. In this case, by Jensen's inequality, the right hand side of the above identity will diverge as $d$ grows, unless $N$ is of exponential order in $d$. As a result, we can stabilize the algorithm with an $\mathcal{O}(nd\kappa^d)$ cost, where $\kappa>1$. However, if one sets $M_d=d$, then the right hand side of the above identity will stabilize and the cost of the
algorithm is $\mathcal{O}(nNd^2)$. This provides some intuition about why our approach may be effective in high dimensions.
\end{rem}

In fact, one can say a bit more. 
We suppose that
$\alpha(x)/q(x)$ is upper and lower bounded; this typically implies that $x$ lies only on some compact subset of $\mathbb{R}$.
Denoting by $\Rightarrow$ weak convergence as $d\rightarrow\infty$ and $\mathcal{LN}(\mu,\sigma^2)$ the log-normal distribution of location $\mu$, scale $\sigma$, 
we have the following.

\begin{prop}\label{prop:conv_weights}
Let $M_d=d/c$, for some $0<c<+\infty$ and $N,n\geq 1$ fixed. Suppose that
\begin{equation}
\sigma^2 := \frac{\int\alpha(x)^2/q(x)dx}{(\int\alpha(x)dx)^2} - 1 <+\infty.\label{eq:sig_def}
\end{equation}
Then we have that $\mathbf{G}_k(X_k^{i,1:M_d}(1:d))/(\int_{\mathbb{R}}\alpha(x)dx)^{d}\Rightarrow V_{k}^i$, and subsequently 
$$
\frac{p^{N,M_d}(y_{1:n})}{p(y_{1:n})} \Rightarrow \prod_{k=1}^n \frac{1}{N}\sum_{i=1}^N V_k^i
$$
where $V_k^i\stackrel{\textrm{i.i.d.}}{\sim}\mathcal{LN}(-c\sigma^2/2,c\sigma^2)$.
\end{prop}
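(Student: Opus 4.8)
The plan is to reduce the whole statement to a single weak-limit computation for one block $\mathbf{G}_k(X_k^{i,1:M_d}(1:d))/\bar\alpha^{\,d}$, where I write $\bar\alpha:=\int_{\mathbb{R}}\alpha(x)\,dx$, and then to assemble the $nN$ independent blocks. Setting $U_{l,j}:=\alpha(X_k^{i,l}(j))/\big(q(X_k^{i,l}(j))\,\bar\alpha\big)$, under the assumed i.i.d.\ structure these are i.i.d.\ over all indices with $\mathbb{E}[U]=1$ and, by \eqref{eq:sig_def}, $\mathrm{Var}(U)=\sigma^2$. With $A_j:=M_d^{-1}\sum_{l=1}^{M_d}U_{l,j}$ one has exactly $\mathbf{G}_k/\bar\alpha^{\,d}=\prod_{j=1}^d A_j$, so the core task is to prove $\prod_{j=1}^d A_j\Rightarrow\mathcal{LN}(-c\sigma^2/2,c\sigma^2)$ and then transfer this to the ratio.

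I would work on the log scale, writing $\log\prod_j A_j=\sum_{j=1}^d\log A_j$ and Taylor-expanding each summand about $A_j=1$ as $\log A_j=(A_j-1)-\tfrac12(A_j-1)^2+R_j$, handling the three pieces separately. For the linear part, $\sum_j(A_j-1)=M_d^{-1}\sum_{j,l}(U_{l,j}-1)$ is $M_d^{-1}$ times a sum of $dM_d$ i.i.d.\ centred terms of variance $\sigma^2$; since $dM_d\to\infty$, the classical central limit theorem together with the scaling $M_d=d/c$ (which makes the variance of this quantity equal to $d\sigma^2/M_d=c\sigma^2$) yields $\sum_j(A_j-1)\Rightarrow\mathcal{N}(0,c\sigma^2)$. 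For the quadratic part, each $(A_j-1)^2$ has mean $\sigma^2/M_d$, so the sum has mean $d\sigma^2/M_d=c\sigma^2$; because $\alpha/q$ is bounded the variable $U$ is bounded and all its moments are finite, giving $\mathbb{E}[(A_1-1)^4]=O(M_d^{-2})$, whence the variance of $\sum_j(A_j-1)^2$ is $d\cdot O(M_d^{-2})=O(d^{-1})\to0$ and therefore $\sum_j(A_j-1)^2\rightarrow_{\mathbb{P}}c\sigma^2$.

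The main obstacle is controlling the remainder $\sum_j R_j$ and justifying the expansion at all, and this is precisely where the lower bound on $\alpha/q$ is essential: it keeps every $A_j$ inside a fixed interval bounded away from $0$, so $\log$ is smooth there and $|R_j|\le C\,|A_j-1|^3$ with $C$ depending only on that lower bound. Then $\mathbb{E}\sum_j|R_j|\le C d\,\mathbb{E}|A_1-1|^3\le C d\,(\mathbb{E}[(A_1-1)^4])^{3/4}=C d\,O(M_d^{-3/2})=O(d^{-1/2})\to0$, so $\sum_j R_j\rightarrow_{\mathbb{P}}0$. Combining the three estimates by Slutsky gives $\sum_j\log A_j\Rightarrow\mathcal{N}(-c\sigma^2/2,c\sigma^2)$, and the continuous mapping theorem applied to $\exp$ delivers the claimed log-normal limit $\mathbf{G}_k/\bar\alpha^{\,d}\Rightarrow V_k^i$ for a single block.

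Finally I would assemble the blocks. Since the $\mathbf{G}_k^i$ for distinct pairs $(k,i)$ are built from disjoint, hence independent, collections of the variables $U_{l,j}$, the marginal convergences upgrade to joint convergence of the vector $\big(\mathbf{G}_k^i/\bar\alpha^{\,d}\big)_{1\le k\le n,\,1\le i\le N}$ to independent copies $V_k^i\sim\mathcal{LN}(-c\sigma^2/2,c\sigma^2)$. Because the product structure gives $p(y_{1:n})=\bar\alpha^{\,nd}$, the ratio in \eqref{eq:nc_est_hd} equals $\prod_{k=1}^n N^{-1}\sum_{i=1}^N \mathbf{G}_k^i/\bar\alpha^{\,d}$, and the map $(w_k^i)\mapsto\prod_{k}N^{-1}\sum_i w_k^i$ is continuous, so a second application of the continuous mapping theorem yields $p^{N,M_d}(y_{1:n})/p(y_{1:n})\Rightarrow\prod_{k=1}^n N^{-1}\sum_{i=1}^N V_k^i$, completing the proof.
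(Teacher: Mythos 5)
Your proof is correct, but it takes a genuinely different route from the paper. The paper disposes of this proposition in one line by invoking the log-normal central limit theorem of B\'erard, Del Moral and Doucet \cite{berard} for particle approximations of normalizing constants, plus ``elementary calculations''; you instead give a self-contained derivation that exploits the fact that, in this fully i.i.d.\ setting, the local normalizing-constant estimate factorizes exactly as $\prod_{j=1}^d A_j$ with the $A_j$ independent averages of i.i.d.\ bounded ratios. Your three-term Taylor expansion of $\sum_j \log A_j$ is the standard mechanism behind such log-normal limits --- the linear term supplies the $\mathcal{N}(0,c\sigma^2)$ fluctuation via the classical CLT, the quadratic term concentrates at $c\sigma^2$ and produces the $-c\sigma^2/2$ drift, and the cubic remainder is killed by the moment bound $\mathbb{E}[(A_1-1)^4]=O(M_d^{-2})$ --- and your use of the assumed upper and lower bounds on $\alpha/q$ to keep $A_j$ in a compact interval away from $0$ is exactly where that hypothesis earns its keep; the paper never makes this explicit. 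What your approach buys is transparency and independence from an external reference (the cited result covers genuine particle filters with resampling-induced dependence, a generality not needed here since the weights at each space-step depend only on freshly drawn coordinates); what the paper's approach buys is brevity and a template that would survive in models where the exact factorization into independent blocks breaks down. Your final assembly step (joint convergence of the $nN$ independent blocks plus continuous mapping applied to $(w_k^i)\mapsto\prod_k N^{-1}\sum_i w_k^i$, together with $p(y_{1:n})=(\int\alpha)^{nd}$) is precisely the ``elementary calculations'' the paper omits.
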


\begin{proof}
The result follows from \cite[Theorem 1.1]{berard} and   elementary calculations, which we omit.
\end{proof}

\begin{rem}
The result suggests that the algorithm stabilizes as $d$ grows at a $\mathcal{O}(nNd^2)$ cost. Using the continuous mapping theorem, for $N>1$, one can show that the effective sample size
(ESS) will also converge to a non-trivial random variable; see e.g.~\cite[Proof of Theorem 3.2]{beskos}. Moreover, based upon personal communication with Pierre Del Moral, we conjecture that setting $M_d=d^{1+\delta}/c$, for some $\delta>0$, the ESS converges to $N$; hence suggesting 
that $M_d=\mathcal{O}(d)$ is an optimal computational effort in this case. 
\end{rem}

\begin{rem}
An intuition behind the results is that for a standard particle filter, when run for $n$ steps with $N$ particles, the relative variance 
of the estimate for the normalizing constant grows at most linearly in  the number of steps $n$ provided $N=\mathcal{O}(n)$ (see  \cite{cerou1} for details). In the algorithm, the weights $\mathbf{G}_n$ are estimates of normalizing constants 
for the local filter, so one expects that if $M_d=\mathcal{O}(d)$,
then the algorithm should work well for large $d$. There is, however, an important point to be made. The result above assumes an i.i.d.~structure which removes any path degeneracy effect, both
within a local filter, and in the time-dependence between observations. However, in general contexts one still expects that setting $M_d$ to be a polynomial function of $d$ should allow reasonable
empirical performance. This is because the relative variance  of the normalizing constant can be controlled in such path dependent cases, with polynomial cost; see \cite{wang} for example.
\end{rem}

\begin{rem}
In the case of no global resampling, one would typically use the estimate, for $p(y_{1:n})$
$$
\frac{1}{N}\sum_{i=1}^N \prod_{k=1}^n\prod_{j=1}^d\frac{1}{M_d}\sum_{l=1}^{M_d}\frac{\alpha(x_k^{i}(j))}{q(x_k^{i}(j))}.
$$
A weak convergence result also holds in this case. 
\end{rem}

We now adopt a context of no global resampling and consider the Monte Carlo error of the following two estimates, for $n\geq 1$, $l\in\{1,\dots,M_d\}$ fixed and $\varphi\in\mathcal{C}_b(\mathbb{R})$, 
$$
\frac{1}{N}\sum_{i=1}^N \frac{\prod_{k=1}^n \mathbf{G}_k(x_k^{i,1:M_d}(1:d))}{\sum_{j=1}^N \prod_{k=1}^n \mathbf{G}_k(x_k^{j,1:M_d}(1:d))}\frac{1}{M_d}\sum_{l=1}^{M_d} \varphi(\check{x}^{i,l}_n(d))
$$
and
$$
\frac{1}{N}\sum_{i=1}^N \frac{\prod_{k=1}^n \mathbf{G}_k(x_k^{i,1:M_d}(1:d))}{\sum_{j=1}^N \prod_{k=1}^n \mathbf{G}_k(x_k^{j,1:M_d}(1:d))}  \varphi(\check{x}^{i,l}_n(d)).
$$
We remark that this is the simplest case in terms of analysis, as for example the case of when global resampling is considered  
 is seemingly more complex. 
We now give our result; the technical results for the proof can be found in Appendix
\ref{app:mc_avg}. We set $$\pi(\varphi)=\int_{\mathbb{R}}\alpha(x)\varphi(x)dx/\int_{\mathbb{R}}\alpha(x)dx.$$ 
For $\varphi\in\mathcal{B}_b(\mathbb{R})$, we denote $\|\varphi\|_{\infty}:=\sup_{x\in\mathbb{R}}|\varphi(x)|$. Also $\mathcal{C}_b(\mathbb{R})$ are the continuous and real-valued functions on $\mathbb{R}$.

\begin{theorem}
Let $M_d=d/c$, for some $0<c<+\infty$ and $n\geq 1$, $N>1$ fixed. Then we have, for any $\varphi\in\mathcal{C}_b(\mathbb{R})$, $1\leq p <+\infty$
\begin{enumerate}
\item{
$$
\lim_{d\rightarrow\infty}\mathbb{E}\Big[\Big| \sum_{i=1}^N \frac{\prod_{k=1}^n \mathbf{G}_k(X_k^{i,1:M_d}(1:d))}{\sum_{j=1}^N \prod_{k=1}^n \mathbf{G}_k(X_k^{j,1:M_d}(1:d))}\frac{1}{M_d}\sum_{l=1}^{M_d} \varphi(\check{X}^{i,l}_n(d))-\pi(\varphi)\Big|^p\Big]^{1/p} = 0
$$
}
\item{there exists an $M(p)<+\infty$, depending upon $p$ only, such that
$$
\lim_{d\rightarrow\infty}\mathbb{E}\Big[\Big| \sum_{i=1}^N \frac{\prod_{k=1}^n \mathbf{G}_k(X_k^{i,1:M_d}(1:d))}{\sum_{j=1}^N \prod_{k=1}^n \mathbf{G}_k(X_k^{j,1:M_d}(1:d))}  \varphi(\check{X}^{i,l}_n(d))-\pi(\varphi)\Big|^p\Big]^{1/p} \leq 
$$
$$
\frac{M(p)\|\varphi\|_{\infty}}{\sqrt{N}}\big[\exp\{-c\sigma^2p/2 + c\sigma^2p^2/2\}+1\big]^{1/p}
$$
where $\sigma^2$ is as in \eqref{eq:sig_def}.}
\end{enumerate}
\end{theorem}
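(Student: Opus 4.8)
The plan is to treat both estimators as self-normalised averages against the global weights $W^i := \prod_{k=1}^n \mathbf{G}_k(X_k^{i,1:M_d}(1:d))/\sum_{j=1}^N\prod_{k=1}^n\mathbf{G}_k(X_k^{j,1:M_d}(1:d))$, which satisfy $W^i\in[0,1]$ and $\sum_{i=1}^N W^i=1$, and to write the local estimate produced inside the $i$-th global particle as $\hat\pi^i(\varphi):=\tfrac1{M_d}\sum_{l=1}^{M_d}\varphi(\check X_n^{i,l}(d))$. Because the model is i.i.d.\ in space and time and $q_{n,j}=q$, at the final space-step the samples $X_n^{i,l}(d)$ are i.i.d.\ from $q$ and the resampled values $\check X_n^{i,l}(d)$ are drawn from the weighted empirical measure with weights $\alpha/q$, i.e.\ a single-level importance--resampling approximation of $\pi$; moreover these draws are independent across $i$ and identically distributed.

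For the first estimate I would use $\sum_i W^i=1$ to write the error as $\sum_{i=1}^N W^i\,(\hat\pi^i(\varphi)-\pi(\varphi))$. Since $(W^i)_i$ is a random probability vector, Jensen's inequality gives $|\sum_i W^i(\hat\pi^i(\varphi)-\pi(\varphi))|^p\le\sum_i W^i\,|\hat\pi^i(\varphi)-\pi(\varphi)|^p$, and then $W^i\le1$ together with the fact that the $\hat\pi^i(\varphi)$ are identically distributed yields $\mathbb{E}[|\mathrm{Est}_1-\pi(\varphi)|^p]\le N\,\mathbb{E}[|\hat\pi^1(\varphi)-\pi(\varphi)|^p]$. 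It then remains to invoke the standard $L^p$ consistency of a single importance--resampling step (with bounded weights $\alpha/q$ and bounded $\varphi$), which gives $\mathbb{E}[|\hat\pi^1(\varphi)-\pi(\varphi)|^p]=\mathcal{O}(M_d^{-p/2})$; since $M_d=d/c\to\infty$ this vanishes, proving part 1.

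For the second estimate there is no averaging over $l$, so $\varphi(\check X_n^{i,l}(d))$ does not concentrate and only a bound is available. I would first pass to the limit: by Proposition \ref{prop:conv_weights} the weight vector converges weakly to $\bar W^i=V^i/\sum_j V^j$ with log-normal $V^i$, while the single draws $\check X_n^{i,l}(d)$ converge to i.i.d.\ $Z^i\sim\pi$; the technical lemmas of Appendix \ref{app:mc_avg} are used to show these limits hold \emph{jointly}, with the $Z^i$ asymptotically independent of the weights. Since $\mathrm{Est}_2$ is a convex combination of values of $\varphi$ it is bounded by $\|\varphi\|_\infty$, so continuity of $\varphi$ and uniform integrability upgrade the weak convergence (via the continuous mapping theorem) to convergence of $L^p$ moments, giving $\lim_{d\to\infty}\mathbb{E}[|\mathrm{Est}_2-\pi(\varphi)|^p]=\mathbb{E}[|\sum_i\bar W^i\bar c_i|^p]$ with $\bar c_i:=\varphi(Z^i)-\pi(\varphi)$ i.i.d., mean zero, bounded by $2\|\varphi\|_\infty$ and independent of $(\bar W^i)$. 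To bound the limit I would split $\sum_i\bar W^i\bar c_i=\tfrac1N\sum_i\bar c_i+\sum_i(\bar W^i-\tfrac1N)\bar c_i$ and apply Minkowski. The first term is an average of $N$ i.i.d.\ mean-zero bounded variables, so Marcinkiewicz--Zygmund gives an $L^p$ norm of order $\|\varphi\|_\infty/\sqrt N$ (this produces the ``$+1$'' contribution). For the second term, conditioning on the weights and applying Marcinkiewicz--Zygmund bounds its $L^p$ norm by a constant times $\|\varphi\|_\infty\,\mathbb{E}[(\sum_i(\bar W^i-\tfrac1N)^2)^{p/2}]^{1/p}\le\|\varphi\|_\infty\,\mathbb{E}[(\sum_i(\bar W^i)^2)^{p/2}]^{1/p}$; writing $\sum_i(\bar W^i)^2$ as $N^{-1}$ times an inverse-ESS ratio and estimating the moments of the log-normal ratio (using that $V^i$ has all positive and negative moments and $\mathbb{E}[V^i]=1$) extracts the factor $N^{-1/2}$ together with the moment factor $\exp\{-c\sigma^2p/2+c\sigma^2p^2/2\}=\mathbb{E}[(V^i)^p]$. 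Collecting the two terms via $a^{1/p}+b^{1/p}\le 2^{1-1/p}(a+b)^{1/p}$ gives the stated bound.

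The main obstacle is the joint weak convergence in part 2, specifically the asymptotic independence between the single resampled coordinate $\check X_n^{i,l}(d)$ and the global weight $\mathbf{G}_n^i$. These are genuinely dependent at finite $d$, because $\mathbf{G}_n^i=\prod_{j=1}^d(\tfrac1{M_d}\sum_l\alpha/q)$ contains the coordinate-$d$ factor that also governs the resampling producing $\check X_n^{i,l}(d)$. The decoupling as $d\to\infty$ is exactly of the type underlying the log-normal CLT of \cite{berard}: the $j=d$ coordinate is one of $d$ asymptotically negligible contributions to $\log\mathbf{G}_n^i$, so conditioning on the weight does not bias the limiting law of the single draw. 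Making this precise, and checking the uniform integrability needed to turn weak into $L^p$ convergence, is where the technical lemmas of the Appendix do the real work; by contrast the moment inequalities in the final step are routine.
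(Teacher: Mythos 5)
Your proof is correct, and it splits naturally into a part that mirrors the paper and a part that does not. For Part 2 you follow essentially the paper's route: Proposition \ref{prop:conv_weights} together with Lemma \ref{lem:asymp_ind}(1) give the joint weak convergence of the normalised weights and the single resampled draws to $(\bar W^i,\varphi(V^i))$ with the two limits independent, boundedness by $\|\varphi\|_\infty$ upgrades this to convergence of $p$-th moments, and the closing moment bound is exactly the calculation the paper delegates to \cite[Theorem 3.3]{beskos}; your Minkowski plus Marcinkiewicz--Zygmund decomposition into $\tfrac1N\sum_i\bar c_i$ and $\sum_i(\bar W^i-\tfrac1N)\bar c_i$ is a faithful reconstruction of that calculation (one bookkeeping caveat: the limiting weights involve $\prod_{k=1}^n V_k^i$, so the log-normal moment factor should carry an $n$ in the exponent; the paper's statement suppresses this and your sketch inherits the same imprecision, so it is not a defect of your argument relative to theirs). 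Part 1 is where you genuinely diverge, to your advantage: the paper proves weak convergence of the whole estimator to the constant $\pi(\varphi)$ via the continuous mapping theorem and the asymptotic independence of Lemma \ref{lem:asymp_ind}(2), then invokes boundedness to pass to $L^p$; you instead use $\sum_i W^i=1$, Jensen's inequality for the random probability vector $(W^i)$, and the crude bound $W^i\le 1$ to reduce everything to $N\,\mathbb{E}[|\hat\pi^1(\varphi)-\pi(\varphi)|^p]$, which is $\mathcal{O}(M_d^{-p/2})$ by standard $L^p$ bounds for a single sampling--importance--resampling step with bounded weights $\alpha/q$ (valid here because, under the product model, the $d$-th coordinates $X_n^{i,l}(d)$ are i.i.d.\ from $q$ and the local resampling uses only the weight $\alpha/q$ at coordinate $d$). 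This bypasses the asymptotic-independence lemma entirely for Part 1 and yields an explicit $\mathcal{O}(d^{-1/2})$ rate, which the paper's qualitative argument does not provide.
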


\begin{proof}
For Case 1.~we have that by Proposition \ref{prop:conv_weights}, and the continuous mapping theorem that (after scaling the numerator and denominator by $(\int\alpha(x)dx)^d$), for each $i$
$$ 
 \frac{\prod_{k=1}^n \mathbf{G}_k(X_k^{i,1:M_d}(1:d))}{\sum_{j=1}^N \prod_{k=1}^n \mathbf{G}_k(X_k^{j,1:M_d}(1:d))} \Rightarrow
\frac{\prod_{k=1}^n V_k^i}{\sum_{j=1}^N \prod_{k=1}^n  V_k^j}.
$$
where $V_k^i\sim\mathcal{LN}(-c\sigma^2/2,c\sigma^2)$, for $\sigma^2$ as in Proposition \ref{prop:conv_weights}.
By standard importance sampling and resampling results 
(see for instance \cite{rubin})), we have that
$$
\frac{1}{M_d}\sum_{l=1}^{M_d} \varphi(\check{X}^{i,l}_n(d)) \rightarrow_{\mathbb{P}} \pi(\varphi).
$$
By Lemma \ref{lem:asymp_ind} 2., these two terms are asymptotically independent. Thus we have
$$
\sum_{i=1}^N \frac{\prod_{k=1}^n \mathbf{G}_k(X_k^{i,1:M_d}(1:d))}{\sum_{j=1}^N \prod_{k=1}^n \mathbf{G}_k(X_k^{j,1:M_d}(1:d))}\frac{1}{M_d}\sum_{l=1}^{M_d} \varphi(\check{X}^{i,l}_n(d))
\Rightarrow \pi(\varphi).
$$
The proof of 1.~is complete on noting the boundedness of the associated quantities.

For Case 2.~by Proposition \ref{prop:conv_weights}, 
the fact that $\check{X}^{i,l}_n(d)\Rightarrow V^i\sim \pi$ (see e.g.\@ \cite{rubin})
 and Lemma~\ref{lem:asymp_ind} 1.~we have
$$
\sum_{i=1}^N \frac{\prod_{k=1}^n \mathbf{G}_k(X_k^{i,1:M_d}(1:d))}{\sum_{j=1}^N \prod_{k=1}^n \mathbf{G}_k(X_k^{j,1:M_d}(1:d))}  \varphi(\check{X}^{i,l}_n(d))
\Rightarrow \sum_{i=1}^N\frac{\prod_{k=1}^n V_k^i}{\sum_{j=1}^N \prod_{k=1}^n  V_k^j}\varphi(V^i)
$$
where the $V^i$ are independent of the $V_k^i$ and have a distribution that has density $\pi$. Then, by the   boundedness of the associated quantities we have
$$
\lim_{d\rightarrow\infty}\mathbb{E}\Big[\Big| \sum_{i=1}^N \frac{\prod_{k=1}^n \mathbf{G}_k(X_k^{i,1:M_d}(1:d))}{\sum_{j=1}^N \prod_{k=1}^n \mathbf{G}_k(X_k^{j,1:M_d}(1:d))}  \varphi(\check{X}^{i,l}_n(d))-\pi(\varphi)\Big|^p\Big]^{1/p}
$$
$$
= \mathbb{E}\Big[\Big| \sum_{i=1}^N\frac{\prod_{k=1}^n V_k^i}{\sum_{j=1}^N \prod_{k=1}^n  V_k^j}\varphi(V^i)-\pi(\varphi)\Big|^p\Big]^{1/p}.
$$
The proof can now be completed by the same calculations as in the proof of \cite[Theorem 3.3]{beskos} and are hence omitted.
\end{proof}

\begin{rem}
The main points are, first, that the error in estimation of fixed-dimensional marginals is independent of $d$ and, second, that averaging over the local particle cloud seems to help in high dimensions. 
We repeat that the scaling for $M_d$ that stabilises the weights for the 
global filter 
may be over-optimistic for more general models, due to the loss of a path-degeneracy effect over the observation times in the i.i.d.~case. 
\end{rem}

\subsection{Stability in High Dimensions for Markov Model}

We now consider a more realistic scenario for our analysis in high-dimensions. In order to read this Section, one will need to consult Appendices \ref{sec:prf_consis} and \ref{app:prf_nc}; this Section can be skipped with no loss in continuity.

We consider the interaction of the dimension and the time parameter in the behaviour
of the algorithm. We will now list some assumptions and notations needed to describe the result.
\begin{hypA}\label{hyp:pstruc}
For every $n\geq 1$ we have
$$
g(x_n,y_n)f(x_{n-1},x_n) = \prod_{j=1}^d h(y_n,x_{n}(j))k(x_{n}(j-1),x_n(j))
$$
where $h:\mathbb{R}^k\rightarrow\mathbb{R}^+$, $x_n(0) = x_{n-1}(d)$ and for every $x\in\mathbb{R}$,
$\int_{\mathbb{R}} k(x,x')dx' = 1$.
\end{hypA}
It is noted that even under (A\ref{hyp:pstruc}) a standard particle filter 
which propagates all $d$ co-ordinates together 
 may degenerate as $d$ grows. However, as we will remark, the STPF can stabilize
under assumptions, even if $N=1$.
Our algorithm will use the Markov kernels $k(x_{n}(j-1),x_n(j))$ as the proposals.
Define the semigroup, for $p\geq 1$:
$$
\hat{q}_p(x_{p-1},dx_p) = f(x_{p-1},x_p)g_p(x_p)dx_p
$$
where $g_p(x_p)=g(y_p,x_p)$. For $\varphi\in\mathcal{B}_b(\mathbb{R}^d)$ define
\begin{equation}
\hat{q}_{p,n}(\varphi)(x_{p}) =  \int \hat{q}_{p+1}(x_{p},dx_{p+1})\times\cdots\times\hat{q}_{n}(x_{n-1},dx_{n}) \varphi(x_n)\label{eq:semi1_def}.
\end{equation}

\begin{hypA}\label{hyp:semigroup}
There exists a $c<\infty$, such that for every $1\leq p<n$ and $d\geq 1$
$$
\sup_{x,y}\frac{\hat{q}_{p,n}(1)(x)}{\hat{q}_{p,n}(1)(y)} \leq c.
$$
\end{hypA}
Note (A\ref{hyp:semigroup}) is fairly standard in the literature (e.g.~\cite{delmoral1}) and given (A\ref{hyp:pstruc}) it will hold under some simple assumptions on $h$ and $k$.

Now, we will consider the global filter with $N$ particles,  
as standard results in the literature can provide immediately CLTs and SLLNs
for quantities of interest. We will then investigate the effect of the dimensionality 
$d$ on the involved terms.
Consider the standard estimate for the normalising constant for the global filter
$$
\boldsymbol{\gamma}_n^N(1) := \prod_{p=1}^{n-1} \boldsymbol{\eta}_p^N(\mathbf{G}_p) 
$$
when $\boldsymbol{\eta}_p^N(\cdot)$ simply denotes Monte-Carlo 
averages over the $N$ particle systems at time $p$,
see Appendix \ref{sec:prf_consis} for analytic definitions.
From standard particle filtering theory, we have that 
$\boldsymbol{\eta}_p^N(\cdot)$ is an unbiased estimator of the corresponding
limiting quantity, denoted $\boldsymbol{\gamma}_n(1)$, see e.g.\@ \cite[Theorem 7.4.2]{delmoral}. Also, under our assumptions, one has the following  CLT as $N\rightarrow \infty$
(see \cite[Proposition 9.4.2]{delmoral})
\begin{equation}
\label{eq:CLT}
\sqrt{N}\Big(\frac{\boldsymbol{\gamma}_n^N(1)}{\boldsymbol{\gamma}_n(1)}-1\Big) \Rightarrow \mathcal{N}(0,\sigma^2_n)
\end{equation}
where $\mathcal{N}(0,\sigma^2)$ is the one dimensional
normal distribution with zero mean and variance $\sigma^2$,
and 
$$
\sigma^2_n = \frac{1}{\boldsymbol{\gamma}_n(1)^2}\sum_{p=1}^n \boldsymbol{\gamma}_p(1)^2
\boldsymbol{\eta}_p\bigg(\Big(\mathbf{Q}_{p,n}(1)-\boldsymbol{\eta}_p(\mathbf{Q}_{p,n}(1))\Big)^2\bigg).
$$
All bold terms correspond to standard Feynman-Kac quantities and are defined in Appendix \ref{sec:prf_consis}.
We also show in Appendix \ref{sec:prf_consis} that the normalising 
constant of the global filter coincides with the one of the original filter of interest, 
that is
$$
 \boldsymbol{\gamma}_n(1) \equiv \gamma_n(1) = \int \prod_{p=1}^{n-1}g_p(x_p) f(x_{p-1},x_p)dx_{1:p} =  p(y_{1:n-1})
$$
Thus, (\ref{eq:CLT}) provides in fact a CLT for the 
estimate of STPF for $p(y_{1:n-1})$ proposed in Theorem \ref{theo:consis}.
%
%
%

We have the following result, whose proof is in Appendix \ref{app:prf_nc}:

\begin{theorem}\label{theo:nc_is_ok}
Assume (A\ref{hyp:pstruc}-\ref{hyp:semigroup}). Then there exist a $\bar{c}<\infty$ such that for any $n,d\geq 1$ and any $M_d\geq \bar{c}d$
$$
\sigma^2_n \leq n\bar{c}\Big(\frac{d}{M_d} + 1\Big).
$$
\end{theorem}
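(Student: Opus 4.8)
The plan is to work directly from the closed-form asymptotic variance
$$
\sigma_n^2 = \frac{1}{\boldsymbol{\gamma}_n(1)^2}\sum_{p=1}^n \boldsymbol{\gamma}_p(1)^2\,\boldsymbol{\eta}_p\Big(\big(\mathbf{Q}_{p,n}(1)-\boldsymbol{\eta}_p(\mathbf{Q}_{p,n}(1))\big)^2\Big),
$$
and to show that each of the $n$ summands is bounded by a constant multiple of $(d/M_d+1)$, so that the prefactor $n$ in the statement appears merely as the number of terms in the sum. First I would recall from Appendix \ref{sec:prf_consis} the precise form of the global Feynman--Kac objects: the state $\boldsymbol{\xi}_p$ at time $p$ is an entire local particle system of $M_d$ particles, the potential $\mathbf{G}_p(\boldsymbol{\xi}_p)=\prod_{j=1}^d \frac{1}{M_d}\sum_l G_{p,j}(\cdots)$ is the local filter's unbiased estimate of the time-$p$ normalising constant, and $\mathbf{Q}_{p,n}(1)(x)=\mathbb{E}[\prod_{q=p}^{n-1}\mathbf{G}_q(\boldsymbol{\xi}_q)\mid\boldsymbol{\xi}_p=x]$. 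The central structural observation, inherited from (A\ref{hyp:pstruc}), is that the coupling between consecutive time blocks occurs only through the terminal coordinate $x(d)$ (since $x_p(0)=x_{p-1}(d)$); consequently $\mathbf{Q}_{p,n}(1)(x)$ depends on $x$ only through $\mathbf{G}_p(x)$ and through the empirical distribution of the terminal coordinates of block $p$, on which the effective time-semigroup $\hat{q}_{p,n}(1)$ of (A\ref{hyp:semigroup}) acts.

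Next I would split the fluctuation of $\mathbf{Q}_{p,n}(1)$ into two contributions. The first is the baseline fluctuation that would survive if the local filters were exact (the $M_d\to\infty$ limit, in which $\mathbf{G}_p(\boldsymbol{\xi}_p)$ reduces to the deterministic local normalising-constant $\hat{q}$-functional of $x(d)$): here the variability is that of a standard Feynman--Kac flow in the time direction, and I would bound it by a constant using (A\ref{hyp:semigroup}) together with the multiplicative-stability arguments of \cite[Chapter 7]{delmoral}. This yields the ``$+1$'' term, uniformly in $d$. The second contribution is the additional Monte Carlo noise carried by $\mathbf{G}_p$ at finite $M_d$: since $\mathbf{G}_p$ is the normalising-constant estimate of a particle filter run over the $d$ space steps, its relative variance is of order $d/M_d$ by the linear-in-steps relative-variance bound for particle-filter normalising constants of \cite{cerou1}, applied to the local (space) filter, whose stability is supplied by (A\ref{hyp:pstruc})--(A\ref{hyp:semigroup}). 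The requirement $M_d\geq\bar c d$ is exactly what keeps this relative variance below a fixed constant, which is needed both to control the cross term between the two contributions and to keep the expansion valid.

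Combining the two estimates, each summand is bounded by $\bar c\,(d/M_d+1)$ after using (A\ref{hyp:semigroup}) to control the ratios $\boldsymbol{\gamma}_p(1)^2/\boldsymbol{\gamma}_n(1)^2$ by a constant, together with the identity $\boldsymbol{\gamma}_p(1)\equiv\gamma_p(1)=p(y_{1:p-1})$ established in Appendix \ref{sec:prf_consis}; summing the $n$ terms then gives the claimed bound, possibly after enlarging $\bar c$.

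I expect the main obstacle to be the second contribution: one must establish an $O(d/M_d)$ relative-variance bound for the local filter's normalising-constant estimate that is genuinely uniform in $d$. This requires a mixing/stability property in the space direction, whereas (A\ref{hyp:semigroup}) is stated for the time semigroup; the reconciliation comes from recognising, via (A\ref{hyp:pstruc}), that the whole scheme is a single Feynman--Kac flow along the space--time lattice traversed block by block, so that the stability of the space filter and of the time semigroup are two facets of the same condition. Setting up the Cérou--Del Moral--Guyader bound in this space-time form, and carefully tracking that the constants do not degrade with $d$, is the technical heart of the argument.
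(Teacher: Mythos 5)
Your proposal follows essentially the same route as the paper's proof: the key factorisation $\mathbf{Q}_{p,n}(1)=\mathbf{G}_p(\mathbf{z}_p)\,\eta_{p,d+1}^{M_d}(\hat q_{p+1,n-1}(1))$, an $\mathcal{O}(d/M_d)$ relative-variance bound for the local filter's normalising-constant estimate (the paper invokes \cite[Theorem 16.4.1]{delmoral1} rather than \cite{cerou1}, but it is the same type of non-asymptotic, linear-in-the-number-of-steps bound), a constant bound on the remaining ratio term via (A\ref{hyp:semigroup}) and Jensen's inequality, and a $C_2$-type split of each summand before summing over $p$. The point you flag as the technical heart --- that the stability needed to apply the relative-variance theorem to the local (space-direction) filter must be extracted from (A\ref{hyp:pstruc})--(A\ref{hyp:semigroup}), which are phrased for the time semigroup --- is present in the paper as well, which handles it with the same brief appeal to the assumptions.
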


\begin{rem}
Our result establishes that the asymptotic in $N$ variance of the relative value of the normalizing constant estimate grows at most linearly in $n$ and, if $M_d=\mathcal{O}(d)$
does not grow with the dimension. The cost of the algorithm is $\mathcal{O}(nNd^2)$. 
The linear growth in time  is a standard result in the literature (see \cite{delmoral1}) and one does not expect to do better than this.
Note, that a particular model structure is chosen and one expects a higher cost in more general problems.
\end{rem}

\begin{rem}
We expect that to show that the error in estimation of the filter is time uniform, under (A\ref{hyp:pstruc}), that one will need to set $M_d=\mathcal{O}(d^2)$.
This is because one is performing estimation on the path of the algorithm; see \cite[Theorem 15.2.1 and Corollary 15.2.2]{delmoral1}.
Indeed, one can be even more specific; if $N=1$, then one can show that, under  (A\ref{hyp:pstruc}-\ref{hyp:semigroup}) that the $L_p$-error
associated to the estimate of the filter (applied to a bounded test function in $\mathbb{R}^d$) at time $n$ is upper-bounded by $c\|\varphi\|_{\infty}d/\sqrt{M_d}$ (via
 \cite[Theorem 15.2.1, Corollary 15.2.2]{delmoral1}) with $c$ independent of $d$ and $n$. Thus setting $M_d=\mathcal{O}(d^2)$, the upper-bound depends on $d$
only through $\|\varphi\|_{\infty}$.
\end{rem}

\section{Numerical Results}\label{sec:numerics}

\subsection{Example 1}
\label{sec:exam1}

We consider the following simple model. Let $X_n\in\mathbb{R}^d$ be such that
we have 
$X_0 = \mathbf{0}_d$ (the $d$-dimensional vector of zeros) and
\begin{equation*}
  X_n(j) =
  \sum_{i=1}^{j-1}\beta_{d-j+i+1} X_n(i) +
  \sum_{i=j}^d \beta_{i-j+1}X_{n-1}(i) + \epsilon_n
\end{equation*}
where $\epsilon_n\stackrel{\textrm{i.i.d.}}{\sim}\mathcal{N}(0, \sigma_x^2)$
and $\beta_{1:d}$ are some known static parameters. For the observations, we
set
\begin{equation*}
  Y_n = X_n + \xi_n
\end{equation*}
where $\xi_n(j)\stackrel{\textrm{i.i.d.}}{\sim}\mathcal{N}(0, \sigma_x^2)$,
$j\in\{1,\dots,d\}$. It is easily shown that this linear Gaussian model has
the structure \eqref{eq:hmm_struc}.

We consider the standard particle filter and the STPF. The
data are simulated from the model with 
 $\sigma^2_x = \sigma^2_y = 1$ and $n = 1000$
$d$-dimensional observations. These parameters are also used within the filters. Both filters use
the model  transitions as the proposal and the likelihood function as the potential.
For STPF we use $N = 1000$ and
$M_d = 100$, and for 
the  particle filter algorithm we use $NM_d$ particles. Adaptive resampling is used in all situations (with appropriate
adjustment to the formula of calculating the weights for each of the $N$
particles, as well as the estimates). Some results for $d\in\{10,100,1000\}$
are presented in Figures~\ref{fig:exam1 mean} to~\ref{fig:exam1 var}.

\begin{figure}[!ht]
  \centering
  \includegraphics[width=\linewidth]{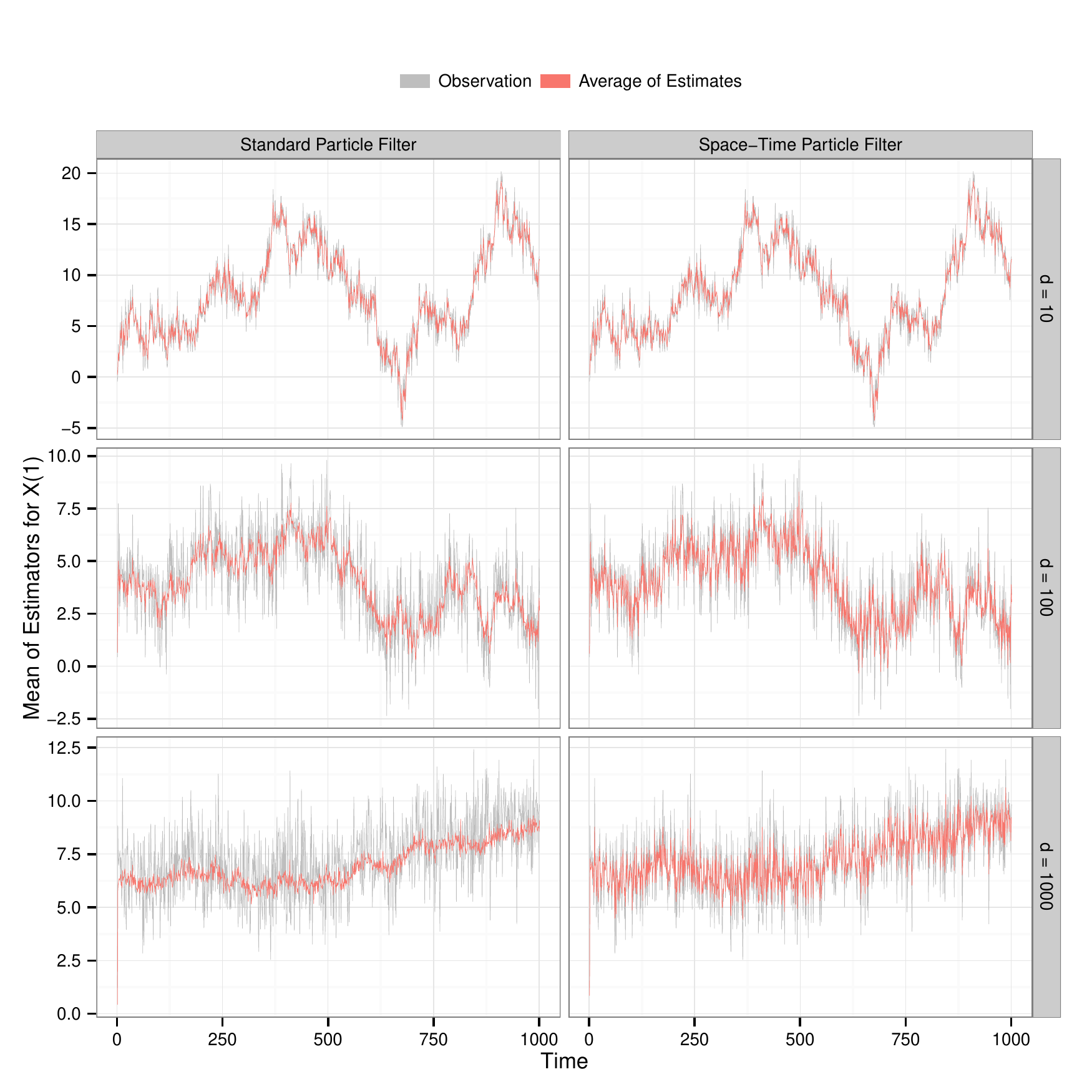}
  \caption{Mean of estimators of $X_n(1)$ for Example~\ref{sec:exam1} across
    100 runs.}
  \label{fig:exam1 mean}
\end{figure}

\begin{figure}[!ht]
  \centering
  \includegraphics[width=\linewidth]{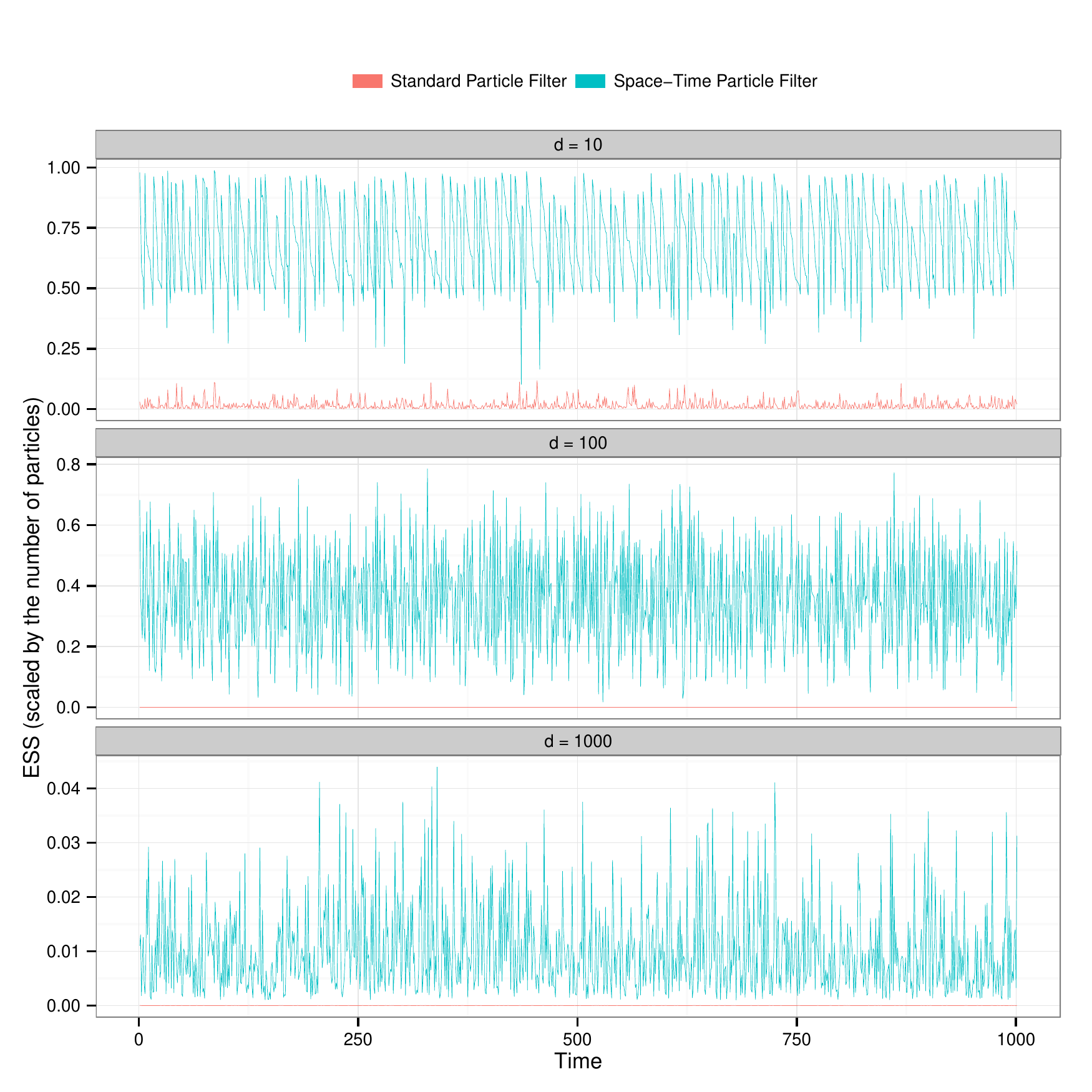}
  \caption{Effective Sample Size plots for Example~\ref{sec:exam1} from a
    single run.}
  \label{fig:exam1 ess}
\end{figure}

\begin{figure}[!ht]
  \centering
  \includegraphics[width=\linewidth]{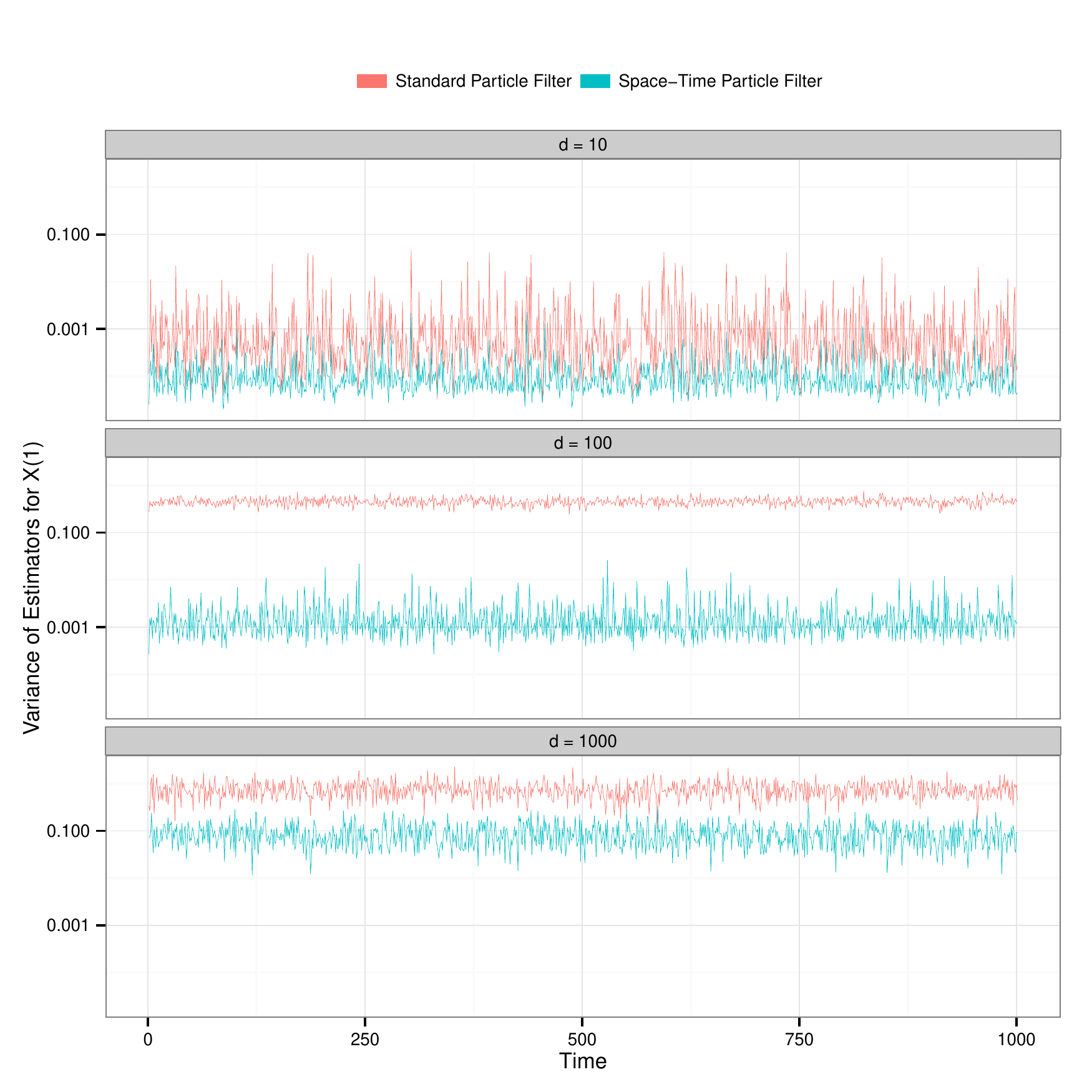}
  \caption{Variance (on logarithm scale) for estimators of $X_n(1)$ for
    Example~\ref{sec:exam1} across 100 runs.}
  \label{fig:exam1 var}
\end{figure}

The averages of estimators per time step (for the posterior mean of the first co-ordinate $X_n(1)$ given all date up to time $n$) across
100 separate algorithmic runs are illustrated in Figure~\ref{fig:exam1 mean}.
For STPF, the estimator corresponds to the double average over $M_d$, $N$ as shown in Section  \ref{sec:algo}. 
The figure shows that the particle filter collapses when the dimension become
moderate or large. It is unable to provide meaningful estimates when $d =
1000$ (as the estimates completely lose track of the observations). In contrast, the STPF performs reasonably well in all three cases.
In Figure~\ref{fig:exam1 ess} we can observe the ESS (scaled by the number of
particles) for each time step of the two algorithms. The standard filter
struggles significantly even in the case $d = 10$ and it collapses when $d =
1000$. The performance of the new algorithm is deteriorating (but not
collapsing) when the dimension increases. This is inevitably due to the path
degeneracy effect that we have mentioned. These conclusions are further
supported in Figure~\ref{fig:exam1 var} where the variance per time step
for the estimators of the posterior mean of the first co-ordinate $X_n(1)$ 
(given the data up to time $n$)
across 100 runs is displayed.

\subsection{Example 2}
\label{sec:exam2}

\subsubsection{Model and Simulation Settings}

We consider the following model on a two-dimensional graph, which follows that
described in \cite{rebs}. Let the components of state $X_n$ be indexed by
vertices $v \in V$, where $V = \{1,\dots,d\}^2$. The dimension of the model is
thus $d^2$. The distance between two vertices, $v = (a, b)$ and $u = (c, d)$,
is calculated in the usual Euclidean sense, $D(v, u) = \sqrt{(a - c)^2 + (b -
d)^2}$. At time $n$, $X_n(v)$ follows a mixture distribution,
\begin{equation*}
  f(x_{n-1},x_n(v)) = \sum_{u\in N(v)} w_u(v) f_u(x_{n-1}(u), x_n(v))
\end{equation*}
where $N(v) = \{u : D(v, u) \le r\}$ for $r \ge 1$ is the neighborhood of
vertex $v$. For observations,
\begin{equation*}
  Y_n = X_n + \xi_n
\end{equation*}
where $\xi_n(v)$, $v \in V$ are i.i.d.\ $t$-distributed random variables with
degree of freedom $\nu$.

In this example, we use a Gaussian mixture with component mean $X_{n-1}(u)$
and unity variance. The weights are set to be $w_u(v) \propto 1 / (D(v, u) +
\delta)$ and $\sum_{u\in N(v)}w_u(v) = 1$. In other words, when $\delta\to 0$,
each vertex evolves as a Gaussian random walk itself. We simulated data from
model $r = 1$, $\delta = 1$, $\nu = 10$ and $d = 32$. It results in a 1024
dimensional model. These parameters are also used in the filters.

We will compare the standard particle filter, the STPF, the marginal STPF
algorithm (as described in Section~\ref{sec:path_degen}) and the block particle
filter (BPF) in \cite{rebs} (notice that the block particle filter is characterised by space varying bias, by construction). The simulations for the STPF versions are done with $N = M_d = 100$.  The
number of particles for the standard particle filter and BPF are $NM_d$.  For
the marginal algorithm, we also simulated with $N = 1$ and $M_d = 1000$. The
block size of BPF is set to be $b^2$, $b \in
\{1,\dots,d\}$, and it is partitioned such that each block is itself a square.
The MCMC moves of the marginal algorithm are simple Gaussian random walks with
standard deviation (the scale) being $0.5$. The optimal block size in
\cite{rebs} is about $b = 7$ for ten thousand particles and a two-dimensional
graph. Thus, we considered the cases $b = 4$ and~$8$, the two nearest integers such
that $d$ is divisible by $b$.

\subsubsection{Results}

A single run takes around 2 minutes for the standard particle filter and the
block filter on an Intel Xeon W3550 CPU, with four cores and eight threads. It
takes around 10 minutes for the STPF. It takes about 40 minutes for
the marginal algorithm with $N = 1$ and $M_d = 1000$, and about 7 hours for $N
= M_d = 100$.

The standard particle filter performs poorly and cannot provide adequate
estimates (similar to the $d = 1000$ case in the previous example). In
Figure~\ref{fig:exam2 var}, we observe the variance per time step of the
estimators for two vertices, across 30 runs. The first vertex, $X_n(3, 3)$ is
not on the boundary of either block size and the second, $X_n(8, 8)$ is on the
boundary of both block sizes. In either case, the STPF significantly
outperforms the block filter, albeit under slightly longer run times. The STPF does not collapse in high-dimensions, but perhaps does not have
excellent performance. The marginal STPF performs very well, but the
computational time is substantially higher than all of the other algorithms.
However, with $N = 1$ and $M_d=\mathcal{O}(d)$, the marginal STPF provides a
good balance between performance and computational cost in challenging
situations where the path degeneracy may hinder successful application of the new
algorithm.

The block filter variance for $X_n(8, 8)$ (boundary vertex) is about
  twice that of $X_n(3, 3)$ while the new algorithm performs equally well for
  both cases. 


\begin{figure}[!ht]
  \centering
  \includegraphics[width=\linewidth]{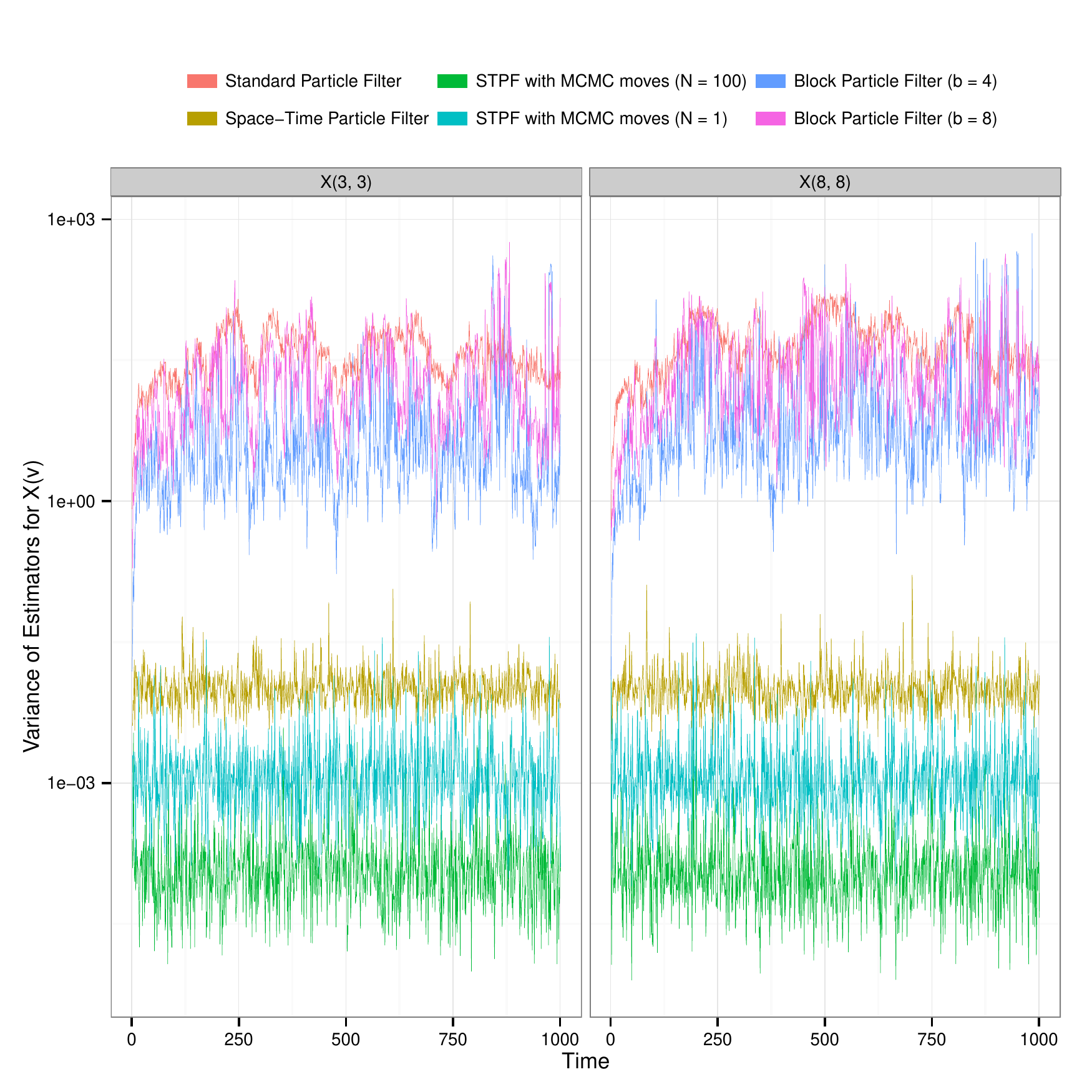}
  \caption{Variance plots (on logarithm scale) for estimators of $X_n(3, 3)$
    and $X_n(8, 8)$ for Example~\ref{sec:exam2}. The variances are estimated
    from 100 simulations for each algorithm.}
  \label{fig:exam2 var}
\end{figure}

\section{Summary}\label{sec:summary}

In this article we have considered a novel class of particle algorithms for high-dimensional filtering problems and investigated both theoretical and practical aspects of the algorithm. 
We believe the article opens new directions in an important and challenging 
Monte-Carlo problem, and several aspects of the method remain to be investigated 
in future research.
There are indeed several possible extensions to the work in this article. In particular, an analysis of the algorithm when the structure of the state-space model is more complex than the structures considered in this
article. We expect that in such scenarios, that the cost of the algorithm should increase, but only by a polynomial factor in $d$. In addition, the interaction of dimension and time behaviour is of particular interest.

\subsubsection*{Acknowledgements}

Ajay Jasra and Yan Zhou were supported by ACRF tier 2 grant R-155-000-143-112. We thank Pierre Del Moral for many useful conversations on this work.

\appendix

\section{Proof of Proposition \ref{prop:rv_res}}\label{sec:rv_res}

\begin{proof}[Proof of Proposition \ref{prop:rv_res}]
We set
\begin{equation*}
X= 
\frac{1}{M_d}\sum_{l=1}^{M_d}\frac{\alpha(x_1^{i,l}(1))}{q(x_1^{1,l}(1))}\big/\int\alpha(x)dx,\quad 
I = 
\frac{1}{N}\sum_{i=1}^N \prod_{j=1}^d\frac{1}{M_d}\sum_{l=1}^{M_d}\frac{\alpha(x_1^{i,l}(j))}{q(x_1^{i,l}(j))}\big/\int\alpha(x)dx.
\end{equation*}
Notice that $\mathbb{E}[I]=\mathbb{E}[X]=1$,  so that due to the i.i.d.\@ structure along $j$ we have that
\begin{align*}
\mathbb{E}[I^2] = \frac{1}{N}\big( \mathbb{E}[X^2]  \big)^d
+ \frac{N-1}{N}
\end{align*}
Also, due to the i.i.d.\@ structure along $j,l$ we have
\begin{align*}
\mathbb{E}[X^2]=  \frac{1}{M_d}\frac{\int a^2(x)/q(x)dx}{\big( \int a(x)dx \big)^2}
+ \frac{M_d-1}{M_d}.
\end{align*}
Finally, we have that, due to i.i.d. structure along $n$,
\begin{align*}
\mathbb{E}\Big[\Big(\frac{p^{N,M_d}(y_{1:n})}{p(y_{1:n})}-1\Big)^2\Big]
 &= \mathbb{E}\Big[\Big(\frac{p^{N,M_d}(y_{1:n})}{\big(\int\alpha(x)dx\big)^{nd}}\Big)^2\Big] - 1 \\ 
 &= ( \mathbb{E}[I^2])^n - 1.
\end{align*}
A synthesis of the above three equations gives the required result.
\end{proof}

\section{Proof of Theorem \ref{theo:consis}}\label{sec:prf_consis}

\subsection{Further Notation}
In order to prove Theorem \ref{theo:consis}, we will first introduce another round of notations. Let $(E_n,\mathscr{E}_n)_{n\geq 0}$ be a sequence of measurable spaces endowed with a countably generated $\sigma$-field $\mathscr{E}_n$. The set $\mathcal{B}_b(E_n)$ denotes the class of bounded $\mathscr{E}_n/\mathbb{B}(\bbR)$-measurable functions on $E_n$ where $\mathbb{B}(\mathbb{R})$ is the Borel $\sigma$-algebra on $\mathbb{R}$.
 We will consider non-negative operators $K : E_{n-1} \times \mathscr{E}_n \rightarrow \bbR_+$ such that for each $x \in E_{n-1}$ the mapping $A \mapsto K(x, A)$ is a finite non-negative measure on $\mathscr{E}_n$ and for each $A \in \mathscr{E}_n$ the function $x \mapsto K(x, A)$ is $\mathscr{E}_{n-1} / \mathbb{B}(\mathbb{R})$-measurable; the kernel $K$ is Markovian if $K(x, dy)$ is a probability measure for every $x \in E_{n-1}$.
For a finite measure $\mu$ on $(E_{n-1},\mathscr{E}_{n-1})$ and Borel test function $f \in \mathcal{B}_b(E_n)$ we define
\begin{equation*}
    \mu K  : A \mapsto \int K(x, A) \mu(dx);\quad 
    K f :  x \mapsto \int f(y) \, K(x, dy).
\end{equation*}

\subsection{Feynman-Kac Model on Enlarged Space}
We will define a Feynman-Kac model on an appropriate enlarged space.
That is, one Markov transition on the enlarged space will correspond to one 
observation time and will collect all $d$ space-steps of the local filter for this 
time-step. Some care is needed with the notation, as we need to keep track 
of the development of the co-ordinates at time $n$, together with the states at 
time $n-1$ as the latter are involved in the proposal.

\textbf{Time-Step 1}: Consider observation time 1 of the algorithm.  We define
a sequence of random variables $Z_{1,j}^{l}$ with $j\in\{1,\dots,d+1\}$, $1\le l \le M_d$, such that  
$Z_{1,j}^{l}\in\mathbb{R}^j$, for $j\in\{1,\dots,d\}$,  and  $Z_{1,d+1}^{l}\in\mathbb{R}^d$. For $j\in\{1,\dots,d\}$
we will write the co-ordinates of $Z_{1,j}^{l}$ as $(Z_{1,j}^{l}(1),\dots,Z_{1,j}^{l}(j))$, with the obvious extension for the  case $j=d+1$. 
As $x_0$ is fixed, we will drop it from our notations, as will become clear below.
Also, for simplicity we simply write $q(\cdot)$ instead of the analytical 
$q_{1,j}(\cdot)$ as the subscripts are implied by those of $Z_{1,j}$.
We follow this convention throughout Appendix \ref{sec:prf_consis}.
We define the following sequence of
Markov kernels corresponding to the proposal for the co-ordinates at the first time step:
\begin{align*}
M_{1,1}(dz_{1,1}) &= q(z_{1,1})dz_{1,1}, \quad j=1,\\
M_{1,j}(z_{1,j-1},dz_{1,j}) &= q(z_{1,j}(j)|z_{1,j-1})dz_{1,j}(j)\,\delta_{\{z_{1,j-1}\}}(dz_{1,j}(1:j-1)), \quad j\in\{1,\dots,d\}, \\
M_{1,j}(z_{1,j-1},dz_{1,j}) &= \delta_{\{z_{1,j-1}\}}(dz_{1,j}), \quad j=d+1.
\end{align*}
Next, we will take under consideration the weights and the resampling.
For $j\in\{1,\dots,d\}$ and a probability measure $\mu$ on $\mathbb{R}^j$ define
\begin{equation*}
\Phi_{1,j+1}(\mu)(dz) = \frac{\int_{\mathbb{R}^j} \mu(dz') G_{1,j}(z') M_{1,j+1}(z',dz)}{\int_{\mathbb{R}^j} \mu(dz') G_{1,j}(z')}.
\end{equation*}
For the local particle filter in observation time 1, write the un-weighted empirical measure 
$$
\eta_{1,j}^{M_d}(dz) = \frac{1}{M_d}\sum_{l=1}^{M_d}\delta_{z_{1,j}^{l}}(dz), 
\quad j\in\{1,\dots,d\}.  $$
We also consider all random variables involved at time-step 1 and set $$\mathbf{z}_1=(z_{1,1}^{1:M_d},\dots,z_{1,d+1}^{1:M_d}).$$
The joint law of the samples required by the local filter is
\begin{equation}
\boldsymbol{\eta}_1(d\mathbf{z}_1) = 
\Big(\prod_{l=1}^{M_d}M_{1,1}(dz_{1,1}^{l})\Big)
\Big(\prod_{j=2}^{d+1}\prod_{l=1}^{M_d}\Phi_{1,j}(\eta_{1,j-1}^{M_d})(dz_{1,j}^{l})\Big)\Big).\label{eq:eta_1_def}
\end{equation}
Notice, that in the notation we have established herein, 
the potential $\mathbf{G}_1$ defined in the main text 
can now equivalently be expressed as
%
\begin{equation}
\mathbf{G}_1(\mathbf{z}_1) = \prod_{j=1}^d \eta_{1,j}^{M_d}(G_{1,j})\label{eq:g_1_def}.
\end{equation}
We also set $z_{1,d+1}^{l}(1)=z_{1,d+1}^{l}$.

\textbf{Time-Step $n\ge 2$}: At subsequent observation times, $n\geq 2$, we again 
work with variables denoted  $Z_{n,j}^{l}$, with $j\in\{1,\dots,d+1\}$,
but this time we have to keep track of the corresponding paths at time $n-1$,
thus we will use the notation $Z_{n,j}^{l}=(Z_{n,j}^{l,+},Z_{n,j}^{l,-})$, 
with $Z_{n,j}^{l,+}\in \mathbb{R}^j$, $Z_{n,j}^{l,-}\in\mathbb{R}^{d}$, 
with the latter component referring to the `tail' at time $n-1$ of the path found
at $Z_{n,j}^{+}$ at time $n$ and space position $j$. So, we have
$Z_{n,j}^{l}\in\mathbb{R}^{j+d}$, $j\in\{1,\dots,d\}$ and 
$Z_{n,d+1}^{l}\in\mathbb{R}^{2d}$. 
We define the following sequence of
kernels:
\begin{align*}
M_{n,1}(z_{n-1,d+1}^{+},dz_{n,1}) &= q(z_{n,1}^{+}|z_{n-1,d+1}^{+})dz_{n,1}^{+}\,\delta_{\{z_{n-1,d+1}^{+}\}}(dz_{n,1}^{-}), \quad j=1,
\\
M_{n,j}(z_{n,j-1},dz_{n,j}) &= q(z_{n,j}^{+}(j)|z_{n,j-1})dz_{n,j}^{+}(j)\, \delta_{\{z_{n,j-1}^{+}\}}(dz_{n,j}^{+}(1:j-1)) \\
&\qquad \qquad \qquad \cdot \delta_{\{z_{n,j-1}^{-}\}}(dz_{n,j}^{-}),\quad j\in\{1,\dots,d\},
\\
M_{n,d+1}(z_{n,d},dz_{n,d+1}) &= \delta_{\{z_{n,d}\}}(dz_{n,d+1}),\quad 
j=d+1.
\end{align*}
For $j\in\{2,\dots,d\}$ and a probability measure $\mu$ on $\mathbb{R}^{j+d}$ define the measure on $\mathbb{R}^{\min\{j+1,d\}+d}$
$$
\Phi_{n,j+1}(\mu)(dz) = \frac{\int \mu(dz') G_{n,j}(z') M_{n,j+1}(z',dz)}{\int \mu(dz') G_{n,j}(z')}.
$$
For the local particle filter at space-step $j$, we write the  empirical measure 
$$
\eta_{n,j}^{M_d}(dz) = \frac{1}{M_d}\sum_{l=1}^{M_d}\delta_{z_{n,j}^{l}}(dz), 
\quad j\in\{1,\dots,d\}.
$$
Set $\mathbf{z}_n=(z_{n,1}^{1:M_d},\dots,z_{n,d+1}^{1:M_d})$.
The transition law of all involved samples in the local particle filter is
\begin{equation}
\mathbf{M}_n(\mathbf{z}_{n-1}, d\mathbf{z}_n) = 
\Big(\prod_{l=1}^{M_d}M_{n,1}(z_{n-1,d+1}^{l,+},dz_{n,1}^{l})\Big)
\Big(\prod_{j=2}^{d+1}\prod_{l=1}^{M_d}\Phi_{n,j}(\eta_{n,j-1}^{M_d})(dz_{n,j}^{l})\Big)\Big).\label{eq:m_n_def}
\end{equation}
Then, we will work with the potential
\begin{equation}
\mathbf{G}_n(\mathbf{z}_n) = \prod_{j=1}^d\eta_{n,j}^{M_d}(G_{n,j}).\label{eq:g_n_def}
\end{equation}

The algorithm described in Section \ref{sec:algo} corresponds to 
a standard particle filter approximation (with $N$ particles) of a Feynman-Kac model specified 
by the initial distribution (\ref{eq:eta_1_def}), the Markovian transitions
(\ref{eq:m_n_def}) and the potentials in (\ref{eq:g_1_def}), (\ref{eq:g_n_def}).
Thus, for the Monte-Carlo algorithm with $N$ particles,
set $\boldsymbol{\eta}_n^N$ for the $N$-empirical measure of $\mathbf{z}_n^{1:N}$ and set, for $\mu$ a probability measure, $n\geq 2$
$$
\boldsymbol{\Phi}_n(\mu)(d\mathbf{z}) = \frac{\int \mu(d\mathbf{z}')\mathbf{G}_{n-1}(\mathbf{z}')\mathbf{M}_n(\mathbf{z}',d\mathbf{z})}{\int \mu(d\mathbf{z}')\mathbf{G}_{n-1}(\mathbf{z}')}.
$$
Then our global filter samples from the path measure, up-to observation time $n$
$$
\Big(\prod_{i=1}^N\boldsymbol{\eta}_1(d\mathbf{z}_1^i)\Big)\Big(\prod_{k=2}^n\prod_{i=1}^N \boldsymbol{\Phi}_k(\boldsymbol{\eta}_{k-1}^N)(d\mathbf{z}_k^i)\Big)
$$
not including resampling at observation time $n$.
We  use the standard definition of the normalising constant  for any $n\geq 1$
\begin{equation}
\boldsymbol{\gamma}_n(\varphi) = \int  \boldsymbol{\eta}_1(d\mathbf{z}_1) \prod_{p=2}^n \mathbf{G}_{p-1}(\mathbf{z}_{p-1}) \mathbf{M}_p(\mathbf{z}_{p-1},d\mathbf{z}_p)\varphi(\mathbf{z}_n)\label{eq:island_nc}
\end{equation}
and set
\begin{equation}
\boldsymbol{\eta}_n(\varphi) = \frac{\boldsymbol{\gamma}_n(\varphi)}{\boldsymbol{\gamma}_n(1)}\label{eq:island_pred},
\end{equation}
thus $\boldsymbol{\eta}_n$ corresponds to the predictive distribution at time $n$
for the global filter. 
Notice, that from (\ref{eq:island_nc}), we can equivalently write for the unnormalised 
measure
\begin{align}
\label{eq:ante}
\boldsymbol{\gamma}_n(\boldsymbol{\varphi}) & = \boldsymbol{\eta}_1(\mathbf{G}_1 
 \mathbf{M}_2( \mathbf{G}_2\mathbf{M}_3 \cdots (\mathbf{G}_{n-1}
 \textbf{M}_n(
 \boldsymbol{\varphi})))).
\end{align} 

\subsection{Calculation of Quantities for Global Filter}
\label{sub:calc}
We consider functions of the particular form
\begin{equation*}
\boldsymbol{\phi}(\mathbf{z}_p) = \frac{1}{M_d}\sum_{l=1}^{M_d}\phi(z_{p,d+1}^{l,+}) , \quad \phi\in\mathcal{B}_b(\mathbb{R}^d).
\end{equation*}
For functions of the above type, we write $\boldsymbol{\phi}\in \mathcal{A}_p$.
We will illustrate that upon application on this family,  
several Feynman-Kac quantities of the global model (with 
signal dynamics $\boldsymbol{\eta}_1,\mathbf{M}_2$,$\ldots,$ and potentials
$\mathbf{G}_1,\mathbf{G}_2\dots$) coincide with those of 
the original model of interest (with signal dynamics $f_1,f_2,\ldots$ and potentials
$g_1,g_2,\ldots$). 
In particular we calculate $\mathbf{M}_p(\mathbf{G}_p \boldsymbol\phi)$
as, from (\ref{eq:ante}), it is the building block for other expressions.
Notice we can write
\begin{gather*}
\mathbf{M}_p(\mathbf{G}_p \boldsymbol\phi)= \int \mathbf{M}_p(\mathbf{z}_{p-1},d\mathbf{z}_p)
\mathbf{G}_{p}(\mathbf{z}_p)
\frac{1}{M_d}\sum_{l=1}^{M_d}\phi(z_{p,d+1}^{l,+}) =\\
\int \Big(\prod_{l=1}^{M_d}M_{p,1}(z_{p-1,d+1}^{l,+},dz_{p,1}^{l})\Big)
\Big(\prod_{j=2}^{d+1}\prod_{l=1}^{M_d}\Phi_{p,j}(\eta_{p,j-1}^{M_d})(dz_{p,j}^{l})\Big)\Big)
 \prod_{j=1}^d\eta_{p,j}^{M_d}(G_{p,j})\cdot
\eta_{p,d+1}^{M_d}(\phi).
\end{gather*}
So, the integral concerns now the local particle filter 
with weights $G_{p,j}$ and Markov kernels $M_{q,j}$. In particular, 
the integral corresponds to the expected value of the particle approximation 
of the standard Feynamn-Kac unnormalised estimator with standard unbiasedness properties  \cite[Theorem 7.4.2]{delmoral}. That is, the integral is equal to (here, for each $l$, the process
$z_{p,1}^l,z_{p,2}^l,\ldots, z_{p,d+1}^l$ is a Markov chain evolving 
via $M_{p,1}(z_{p-1,d+1}^{l,+},dz_{p,1}^l), M_{p,2}(z_{p,1}^l,dz_{p,2}^l),\ldots,$ $
M_{p,d+1}(z_{p,d}^l,dz_{p,d+1}^l)$ respectively) 
$$\frac{1}{M_d}\sum_{l=1}^{M_d} \mathbb{E}\big[ \phi(z_{p,d+1}^l) G_{p,d}(z_{p,d}^l) \cdots G_{p,2}(z_{p,2}^l)G_{p,1}(z_{p,1}^l)|z_{p-1,d+1}^{l,+} \big]. $$ 
From the analytical definition of the kernels and the weights, 
this latter quantity is easily seen to be equal to
\begin{gather*}
\frac{1}{M_d}\sum_{l=1}^{M_d}\int \phi(z) \prod_{j=1}^{d}\alpha_{p,j}(y_p,z_{p-1,d+1}^{l,+},z(1:j)) 
dz(1:j) = \frac{1}{M_d}\sum_{l=1}^{M_d} \int \phi(z) f_p(z_{p-1,d+1}^{l,+},dz)g_p(z,y_p)dz\\
= \eta_{p-1,d+1}^{M_d}(f_p(g_p\phi)).
\end{gather*}
So, we have obtained that
\begin{equation}
\label{eq:important}
\mathbf{M}_p(\mathbf{G}_p \boldsymbol\phi) = \eta_{p-1,d+1}^{M_d}(f_p(g_p\phi))
\in \mathcal{A}_{p-1}.
\end{equation}
Thus, applying the above result recursively, we obtain from (\ref{eq:ante}) that
\begin{equation}
\label{eq:aa}
\boldsymbol{\gamma}_n(\mathbf{G}_n\boldsymbol{\phi}) =\int \prod_{p=1}^{n}f_p(x_{p-1},dx_p)g_p(x_{p},y_p)\phi(x_p).
\end{equation}
Using the standard Feynman-Kac notation, this latter integral 
can be denoted as $\gamma_n(g_n\phi)$ for the unnormalised measure $\gamma_n$.
Thus, for instance, for the normalising constants, we have that
\begin{equation}
\label{eq:ae}
 \boldsymbol{\gamma}_{n}(\mathbf{G}_{n})  = \gamma_n(g_n) \equiv p(y_{1:n}).
\end{equation}
%

\subsection{Proof}
We have established that the algorithm is a standard particle filter approximation of a Feynman-Kac formula on an extended space. Thus, standard results, e.g.~in \cite{delmoral}, 
will give consistency for Monte-Carlo estimates on the enlarged 
state-space. In only remains to show that indeed the quantities in the statement
of  Theorem \ref{theo:consis} correspond to Monte-Carlo averages of the global 
filter in the enlarged space. 
We look directly at the last two quantities in the statement of the Theorem,
as the derivation for the first two ones is similar and simpler.
For the first we set 
\begin{equation*}
\boldsymbol{\varphi}(\textbf{z}_n) = 
\frac{1}{M_d}\sum_{l=1}^{M_d}\varphi(z_{n,d+1}^{l,+})\in \mathcal{A}_n,
\end{equation*}
and we immediately have that (denoting by $\check{\textbf{z}}_n^{i}$ the resampled 
islands, under the weights $\textbf{G}_n(\textbf{z}_n^{i})$)
\begin{equation*}
\frac{1}{N}\sum_{i=1}^{N}
\boldsymbol{\varphi}(\check{\textbf{z}}_n^{i}) \rightarrow_{\mathbb{P}} 
\frac{\int \boldsymbol{\eta}_n(d\textbf{z}_n)\textbf{G}_n(\textbf{z}_n)\boldsymbol{\varphi}(\textbf{z}_n)}
{\int \boldsymbol{\eta}_n(d\textbf{z}_n)\textbf{G}_n(\textbf{z}_n)} = 
\frac{\boldsymbol{\gamma}_n(\mathbf{G}_n\boldsymbol{\varphi})}{{\boldsymbol{\gamma}}_n(\mathbf{G}_n)}.
\end{equation*}
Notice now that the quantity on the left is precisely the double average 
in the statement of the Theorem and the quantity on  the right, from (\ref{eq:aa}),
is equal to $\gamma_n(g_n\varphi)/\gamma_n(g_n)=\pi_n(\varphi)$. 
For the last statement in the Theorem, the quantity on the left is 
$\boldsymbol{\gamma}_n^N(\textbf{G}_n)$ which, from standard particle 
filter theory converges in probability to $\boldsymbol{\gamma}_n(\textbf{G}_n)
 = \gamma_n(g_n) = p(y_{1:n})$.

\section{Monte Carlo Averages}\label{app:mc_avg}

Below let $V\in\mathbb{R}$ be a random variable with probability density $\alpha(x)/\int_{\mathbb{R}}\alpha(x)dx$. Recall that 
$\check{X}^{i,l}_n(d)$ is particle $i$, local particle $l$ at observation time $n$, dimension $d$ and it has just been locally resampled
using the weights $G_{n,d}(x_{n}^{i,l}(d))$. Recall that there is no global resampling. Throughout $M_d=d/c$ (assumed to be integer, for notational convenience). 

\begin{lem}\label{lem:asymp_ind}
Let $n\geq 1$, $i\in\{1,\dots,N\}$, $l\in\{1,\dots,M_d\}$ be fixed and $\varphi\in\mathcal{B}_b(\mathbb{R})$. Then 
\begin{enumerate}
\item{$\mathbf{G}_n(x_n^{i,1:M_d}(1:d))/(\int\alpha(x)dx)^d$ and
$ \varphi(\check{x}^{i,l}_n(d))$} 
\item{$\mathbf{G}_n(x_n^{i,1:M_d}(1:d))(\int\alpha(x)dx)^d$ and
$\frac{1}{M_d}\sum_{l=1}^{M_d}\varphi(\check{x}^{i,l}_n(d))$}
\end{enumerate}
are asymptotically independent as $d\rightarrow\infty$.
\end{lem}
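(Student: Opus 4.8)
The plan is to turn the required \emph{asymptotic} independence into an \emph{exact} finite-$d$ independence, which is then transported to the limit by Slutsky's theorem. Recall the observation stated before Proposition~\ref{prop:rv_res} that, in this i.i.d.\@ model, all proposed variables $x_n^{i,l}(j)$ may be taken mutually independent, each distributed as $q(\cdot)$. The key structural fact is that the last-coordinate cloud $\{\check{X}_n^{i,l}(d)\}_{l=1}^{M_d}$ is generated at space-step $d$ by drawing the fresh coordinates $\{x_n^{i,l}(d)\}_l$ from $q$, weighting by $\alpha(\cdot)/q(\cdot)$ and resampling; hence each $\check{X}_n^{i,l}(d)$ is a measurable function of $\{x_n^{i,l}(d)\}_l$ and of the step-$d$ resampling randomness \emph{only}, and in particular it carries no dependence on the coordinates $\{x_n^{i,l}(j):j\le d-1\}$.

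I would then split the normalized weight as $\mathbf{G}_n(x_n^{i,1:M_d}(1:d))/(\int_{\mathbb{R}}\alpha)^{d}=W_d\,G_n^{(d-1)}$, where $W_d$ is the $j=d$ factor $(\int\alpha)^{-1}M_d^{-1}\sum_l \alpha(x_n^{i,l}(d))/q(x_n^{i,l}(d))$ and $G_n^{(d-1)}$ collects the normalized factors for $j=1,\dots,d-1$. By the structural fact above, $G_n^{(d-1)}$ is a function of $\{x_n^{i,l}(j):j\le d-1\}$ and is therefore \emph{exactly} independent, at each finite $d$, of both $\varphi(\check{X}_n^{i,l}(d))$ (part~1) and $M_d^{-1}\sum_l \varphi(\check{X}_n^{i,l}(d))$ (part~2). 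Three limiting ingredients then feed in: (i) $W_d\rightarrow_{\mathbb{P}}1$ by the law of large numbers, since it is an average of $M_d=d/c\rightarrow\infty$ i.i.d.\@ mean-one terms with variance $\sigma^2$; (ii) consequently $G_n^{(d-1)}=\big(\mathbf{G}_n/(\int\alpha)^d\big)/W_d\Rightarrow V_n^i\sim\mathcal{LN}(-c\sigma^2/2,c\sigma^2)$ by Proposition~\ref{prop:conv_weights} and Slutsky's theorem; and (iii) the standard importance-sampling/resampling convergence $\varphi(\check{X}_n^{i,l}(d))\Rightarrow\varphi(V)$ with $V\sim\pi$ (and $M_d^{-1}\sum_l\varphi(\check{X}_n^{i,l}(d))\rightarrow_{\mathbb{P}}\pi(\varphi)$ for part~2).

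Finally I would assemble these. Since at each $d$ the pair $\big(G_n^{(d-1)},\,\varphi(\check{X}_n^{i,l}(d))\big)$ consists of two independent coordinates whose marginals converge, its joint law converges to the product of the limit marginals (characteristic functions factorize and converge); applying the continuous map $(w,a,b)\mapsto(wa,b)$ to $(W_d,G_n^{(d-1)},\varphi(\check{X}_n^{i,l}(d)))\Rightarrow(1,V_n^i,\varphi(V))$ shows that $\big(\mathbf{G}_n/(\int\alpha)^d,\varphi(\check{X}_n^{i,l}(d))\big)$ converges to the product law of $V_n^i$ and $\varphi(V)$, which is exactly the claimed asymptotic independence; part~2 is identical, with a degenerate (constant) second limit. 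The only genuinely delicate point is the exact-independence reduction together with step~(ii): one must be sure that discarding a single weight factor does not alter the log-normal limit --- this is where it matters that the limiting fluctuation of $\log\mathbf{G}_n$ is an aggregate over all $d$ dimensions (each contributing $O(1/M_d)=O(1/d)$ to the variance), so that the last factor, the only one entangled with the resampled cloud, is asymptotically negligible. Everything else is routine.
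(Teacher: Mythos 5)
Your proof is correct, but it reaches the conclusion by a genuinely different route from the paper. The paper works directly with the joint characteristic function $\mathbb{E}\big[\exp\{rt_1\overline{\mathbf{G}}_n+rt_2\varphi(\check{X}^{i,l}_n(d))\}\big]$: it first integrates out the step-$d$ resampling, replacing $e^{rt_2\varphi(\check{X}^{i,l}_n(d))}$ by the self-normalised weighted average $\sum_{l}G_{n,d}(X_n^{i,l}(d))e^{rt_2\varphi(X_n^{i,l}(d))}/\sum_{l}G_{n,d}(X_n^{i,l}(d))$, sends this to the deterministic limit $\int\alpha\, e^{rt_2\varphi}/\int\alpha$ by the LLN, and combines with the weak limit of $\overline{\mathbf{G}}_n$ via Slutsky; part 2 is handled by an $A_d+B_d$ splitting with $A_d\rightarrow 0$. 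You instead isolate the only source of dependence: in the i.i.d.\@ model the resampled coordinate $\check{X}_n^{i,l}(d)$ is a function of the step-$d$ cloud and its resampling randomness alone, so the factor $G_n^{(d-1)}$ collecting dimensions $1,\dots,d-1$ is \emph{exactly} independent of it at every finite $d$, while the single entangled factor $W_d$ collapses to $1$ in probability. This is a clean and arguably more illuminating argument, since it makes explicit that the asymptotic independence is really exact independence up to an asymptotically degenerate factor; the paper's conditioning argument is marginally more robust in that it does not lean on the exact product structure of $\mathbf{G}_n$ across coordinates, only on the LLN for the weighted averages. One small point of care in your step (iii): since the lemma only assumes $\varphi\in\mathcal{B}_b(\mathbb{R})$, you cannot get $\varphi(\check{X}_n^{i,l}(d))\Rightarrow\varphi(V)$ from $\check{X}_n^{i,l}(d)\Rightarrow V$ by the continuous mapping theorem; you should instead argue directly that $\mathbb{E}[h(\varphi(\check{X}_n^{i,l}(d)))]$ equals the self-normalised weighted average of $h\circ\varphi$ over the step-$d$ cloud and converges to $\pi(h\circ\varphi)$ for bounded continuous $h$ --- which is exactly the ingredient the paper's conditioning step supplies. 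With that repaired, your argument closes, and part 2 is indeed immediate since the second coordinate converges in probability to the constant $\pi(\varphi)$.
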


\begin{proof}
We first consider statement 1. Set $r=\sqrt{-1}$ and consider the standardised quantity $\overline{\mathbf{G}}_n(x_n^{i,1:M_d}(1:d))=\mathbf{G}_n(x_n^{i,1:M_d}(1:d))/(\int\alpha(x)dx)^d$, then we have that for $(t_1,t_2)$ fixed,
\begin{gather*}
\mathbb{E}\Big[\exp\big\{rt_1\overline{\mathbf{G}}_n(X_n^{i,1:M_d}(1:d))+ rt_2 \varphi(\check{X}^{i,l}_n(d))\big\}\Big] = 
\\
\mathbb{E}\Bigg[\exp\big\{rt_1\overline{\mathbf{G}}_n(X_n^{i,1:M_d}(1:d))\big\} 
\frac{\sum_{l=1}^{M_d}G_{n,d}(X_{n}^{i,l}(d))e^{rt_2\varphi(X^{i,l}_n(d))}}{\sum_{l=1}^{M_d}G_{n,d}(X_{n}^{i,l}(d))}
\Bigg].
\label{eq:nono}
\end{gather*}
By standard SLLN, we have that
\begin{equation*}
\frac{\sum_{l=1}^{M_d}G_{n,d}(X_{n}^{i,l}(d))e^{rt_2\varphi(X^{i,l}_n(d))}}{\sum_{l=1}^{M_d}G_{n,d}(X_{n}^{i,l}(d))} \rightarrow_{\mathbb{P}} \frac{\int_{\mathbb{R}}\alpha(x)e^{rt_2\varphi(x)}dx}{\int_{\mathbb{R}}\alpha(x)dx}.
\end{equation*}
Also, Proposition \ref{prop:conv_weights} implies that
$$
\exp\{rt_1\overline{\mathbf{G}}_n(X_n^{i,1:M_d}(1:d))\} \Rightarrow \exp\{rt_1 V_n^i\}
$$
where $V_n^i\sim\mathcal{LN}(-c\sigma^2/2,c\sigma^2)$ for $\sigma^2$ defined therein. Hence, from Slutsky's lemmas we have 
$$
\exp\{rt_1\overline{\mathbf{G}}_n(X_n^{i,1:M_d}(1:d))\} 
\frac{\sum_{l=1}^{M_d}G_{n,d}(X_{n}^{i,l}(d))e^{rt_2\varphi(X^{i,l}_n(d))}}{\sum_{l=1}^{M_d}G_{n,d}(X_{n}^{i,l}(d))} \Rightarrow \exp\{rt_1 V_n^i\}
\frac{\int_{\mathbb{R}}\alpha(x)e^{rt_2\varphi(x)}dx}{\int_{\mathbb{R}}\alpha(x)dx}.
$$
The proof of 1.~is concluded on noting the boundedness of the functions.

For the proof of 2.~we have
\begin{gather}
\mathbb{E}\Big[e^{rt_1\overline{\mathbf{G}}_n(X_n^{i,1:M_d}(1:d)) + rt_2 \frac{1}{M_d}\sum_{l=1}^{M_d}\varphi(\check{X}^{i,l}_n(d))}\Big] =\nonumber \\
\mathbb{E}\Big[e^{rt_1\overline{\mathbf{G}}_n(X_n^{i,1:M_d}(1:d))}\big[e^{ rt_2 \frac{1}{M_d}\sum_{l=1}^{M_d}\varphi(\check{X}^{i,l}_n(d))} -e^{ rt_2\pi(\varphi)} \big]\Big]
+e^{ rt_2\pi(\varphi)}\mathbb{E}\big[e^{rt_1\overline{\mathbf{G}}_n(X_n^{i,1:M_d}(1:d))}\big]\nonumber \\
=: A_d + B_d 
\label{eq:an}
\end{gather}
where we have used the short-hand $\pi(\varphi)=\int_{\mathbb{R}}\alpha(x)\varphi(x)dx/\int_{\mathbb{R}}\alpha(x)dx$. 
From standard importance sampling and resampling results (see e.g.\@ \cite{rubin}), we have that 
$$
\frac{1}{M_d}\sum_{l=1}^{M_d} \varphi(\check{X}^{i,l}_n(d)) \rightarrow_{\mathbb{P}} \frac{\int_{\mathbb{R}}\alpha(x)\varphi(x)dx}{\int_{\mathbb{R}}\alpha(x)dx}.
$$
So, returning to (\ref{eq:an}), we have obtained that $A_d\rightarrow_{\mathbb{P}} 0$,
thus
$$
\lim_{d\rightarrow\infty}\mathbb{E}\Big[e^{rt_1\overline{\mathbf{G}}_n(X_n^{i,1:M_d}(1:d)) + rt_2 \frac{1}{M_d}\sum_{l=1}^{M_d}\varphi(\check{X}^{i,l}_n(d)) }\Big]
= \exp\{ rt_2\pi(\varphi)\}\mathbb{E}[\exp\{rt_1V_{n}^i\}]
$$
which concludes the proof of 2..
\end{proof}

\section{Proof of Theorem \ref{theo:nc_is_ok}}\label{app:prf_nc}

Recall the notation for the global filter from  Appendix \ref{sec:prf_consis}.
We define the semi-group
\begin{equation*}
\mathbf{Q}_{p+1}(\mathbf{z}_p,d\mathbf{z}_{p+1}) = \mathbf{G}_p(\mathbf{z}_p)\mathbf{M}_{p+1}(\mathbf{z}_p,d\mathbf{z}_{p+1})
\end{equation*}
and we also set
\begin{equation}
\mathbf{Q}_{p,n}(\varphi) = \int \mathbf{Q}_{p+1}(\mathbf{z}_p,d\mathbf{z}_{p+1})\times\cdots\times \mathbf{Q}_{n}(\mathbf{z}_{n-1},d\mathbf{z}_{n})\boldsymbol{\varphi}(\mathbf{z}_{n}).
\label{eq:Q_def}
\end{equation}
Recall from the main result in (\ref{eq:important}) in Appendix \ref{sec:prf_consis}, connecting the global with the local filter,  
that $\mathbf{M}_p(\mathbf{G}_p) = \eta_{p-1,d+1}^{M_d}(f_p(g_p))$,
and upon an iterative application of this result
\begin{equation}
\label{eq:mn}
\mathbf{Q}_{p,n}(1)  =  \mathbf{G}_p(\mathbf{z}_p)
\eta_{p,d+1}^{M_d}(
\hat{q}_{p+1,n-1}(1)).
\end{equation}
We also have that $\boldsymbol{\gamma}_n(1)=\gamma_n(1)=\gamma_p(g_p\hat{q}_{p+1,n-1}(1)) = \pi_p(\hat{q}_{p+1,n-1}(1))\gamma_p(g_p)$ and, finally,  that 
$ \gamma_p(g_p) = \pi_{p-1}(f_p(g_p))\gamma_p(1)$. 
Using all these expressions, simple calculations will give
\begin{align}
\sigma^2_n &=  \sum_{p=1}^n \frac{\boldsymbol{\gamma}_p(1)^2}{
\boldsymbol{\gamma}_n(1)^2}
\boldsymbol{\eta}_p\bigg(\Big(\mathbf{Q}_{p,n}(1)-\boldsymbol{\eta}_p(\mathbf{Q}_{p,n}(1))\Big)^2\bigg) \nonumber \\
 &=  \sum_{p=1}^n \boldsymbol{\eta}_p\bigg(\Big(  
\frac{\mathbf{G}_p(\mathbf{z}_p)}{\mathbf{M}_p(\mathbf{G}_p)}A_p - 1 
 \Big)^2\bigg)
\label{eq:bb}
\end{align}
where we have defined
\begin{equation*}
A_p = \frac{\eta_p^{M_d}(\hat{q}_{p+1,n-1}(1))}{\pi_p(\hat{q}_{p+1,n-1}(1))}
\cdot \frac{\eta_{p-1}^{M_d}(f_p(g_p))}{\pi_{p-1}(f_p(g_p))}.
\end{equation*}
The main thing to notice now, is that $\mathbf{G}_p(\mathbf{z}_p)/\textbf{M}_p(\mathbf{G}_p)$ corresponds to the standard estimate of the normalising constant 
for the $p$-th local filter divided with its expected value, and we can 
use standard results from the literature to control its second moment. 
Indeed,  by Assumptions (A\ref{hyp:pstruc}-\ref{hyp:semigroup}) and  \cite[Theorem 16.4.1]{delmoral1} (see Remark \ref{rem:nc_rev_var}), there exists $\tilde{c}<\infty$ (which does not depend on $p$ or $\mathbf{z}_p$) such that for any $d\geq 1$ and any $M_d\geq \tilde{c}d$
$$
\mathbf{M}_p\bigg(\Big(\frac{\mathbf{G}_p(\mathbf{z}_p)}{\mathbf{M}_p(\mathbf{G}_p)}-1\Big)^2\bigg) \leq \frac{\tilde{c}(2+e)d}{M_d},
$$
where the upper-bound only depends on $d$ via the term $d/M_d$.
Notice also that $f_p(g_p)\equiv 
\hat{q}_{p-1,p}(1)$, so by (A\ref{hyp:semigroup}) and Jensen's inequality 
(so that $M_d/\sum_{i=1}^{M_d} x_i
 \le \sum_{i=1}^{M_d} 
\frac{1}{x_i}/M_d$ for positive $x_i$), we have  
\begin{equation*}
0 \le A_p \le c^4.
\end{equation*}
Thus, returning in (\ref{eq:bb}), and using the last two equations, 
we get, starting with the $C_2$-inequality
\begin{align*}
\boldsymbol{\eta}_p\bigg(\Big(  
\frac{\mathbf{G}_p(\mathbf{z}_p)}{\mathbf{M}_p(\mathbf{G}_p)}A_p - 1 
 \Big)^2\bigg) &\le 
2 \boldsymbol{\eta}_p\bigg(\Big(  
\frac{\mathbf{G}_p(\mathbf{z}_p)}{\mathbf{M}_p(\mathbf{G}_p)} - 1 
 \Big)^2 c^8 \bigg) + 2\boldsymbol{\eta}_p\big((A_p - 1)^2\big) \\
 &= 2c^8 \boldsymbol{\gamma}_{p-1}\Big(\mathbf{G}_{p-1}\mathbf{M}_p\bigg(  
\frac{\mathbf{G}_p(\mathbf{z}_p)}{\mathbf{M}_p(\mathbf{G}_p)} - 1 
 \Big)^2 \bigg)/\boldsymbol{\gamma}_p(1)+ 2\boldsymbol{\eta}_p\big((A_p - 1)^2\big)\\
 & \le \frac{2c^8\tilde{c}(2+e)d}{M_d} + 2c^8.
\end{align*} 
From here, one can easily complete the proof and hence we conclude.

\begin{rem}\label{rem:nc_rev_var}
In the proof of Theorem \ref{theo:nc_is_ok} we have used \cite[Theorem 16.4.1]{delmoral1}. This is a result on the relative variance of the particle estimate of the normalizing constant,
and as stated in \cite{delmoral1} does not include a function, i.e.~an estimate of the form $\prod_{j=1}^d \eta_{p,j}^{M_d}(G_{p,j}) \eta_{p,j}^{M_d}(\varphi)$ for some $\varphi\in\mathcal{B}_b(\mathbb{R}^d)$.  Based on personal communication with Pierre Del Moral, \cite[Theorem 16.4.1]{delmoral1} can be extended to include a function, by modification of the potential functions and the use of the final formula  in \cite[pp.~484]{delmoral1}.
\end{rem}


\begin{thebibliography}{9}



\bibitem{berard}
{\sc B\'erard}, J., {\sc Del Moral,} P., \& {\sc Doucet}, A.~(2015). 
A log-normal central limit theorem for particle
approximations of normalizing constants. \emph{Elec. J. Probab.}~(to appear).

\bibitem{beskos}
{\sc Beskos}, A., {\sc Crisan}, D. \& {\sc Jasra}, A.~(2014). On the stability of sequential Monte Carlo
methods in high dimensions. \emph{Ann. Appl. Probab.}, {\bf 24}, 1396-1445.

\bibitem{beskos1}
{\sc Beskos}, A., {\sc Crisan}, D., {\sc Jasra}, A. \& {\sc Whiteley}, N. P.~(2014). Error bounds and normalizing constants for sequential Monte Carlo samplers in high-dimensions. \emph{Adv. Appl. Probab.}, {\bf 46}, 279--306.


\bibitem{bickel}
{\sc Bickel}, P., {\sc Li}, B. \& {\sc Bengtsson}, T.~(2008).
Sharp failure rates for the bootstrap particle filter in high dimensions.
In \emph{Pushing the Limits of Contemporary Statistics},
B. Clarke \& S. Ghosal, Eds, 318--329, IMS.

\bibitem{cerou1}
{\sc C\'erou}, F., {\sc Del Moral}, P. \& {\sc Guyader}, A.~(2011).
A non-asymptotic variance theorem for un-normalized Feynman-Kac particle models. \emph{Ann. Inst. Henri Poincare}, {\bf 47}, 629--649.


\bibitem{delmoral}
{\sc Del Moral}, P.~(2004). \textit{Feynman-Kac Formulae: Genealogical and
Interacting Particle Systems with Applications}. Springer: New York.

\bibitem{delmoral1}
{\sc Del Moral}, P.~(2013).
\emph{Mean Field Simulation for Monte Carlo Integration}
Chapman \& Hall: London.


\bibitem{delm:06}
{\sc Del Moral,} P., {\sc Doucet}, A. \& {\sc Jasra}, A.~(2006).
Sequential Monte Carlo samplers.
\emph{J.~R.~Statist.} \emph{Soc. B}, {\bf 68}, 411--436.

\bibitem{delmoral_resampling}
{\sc Del Moral}, P., {\sc Doucet}, A. \& {\sc Jasra}, A.~(2012).
On adaptive resampling procedures for sequential Monte Carlo methods. {\it Bernoulli},
{\bf 18}, 252-278.


\bibitem{doucet}
{\sc Doucet}, A. \& {\sc Johansen}, A. (2011). A tutorial on particle filtering and smoothing: Fifteen years later. In \emph{
Handbook of Nonlinear Filtering} (eds. D. Crisan et B. Rozovsky), Oxford University Press: Oxford.

\bibitem{johansen}
{\sc Johansen}, A. M., {\sc Whiteley}, N.  \& {\sc Doucet}, A.~(2012). Exact approximation of Rao-Blackwellised particle filters. In \emph{Proc. 16th IFAC Symp. Systems Ident}.


\bibitem{kantas}
{\sc Kantas}, N., {\sc Beskos}, A., \& {\sc Jasra}, A.~(2014). Sequential Monte Carlo for inverse problems: a case study for the
Navier Stokes equation. \emph{SIAM/ASA JUQ}, {\bf 2}, 464--489.

\bibitem{poy}
{\sc Poyiadjis}, G., {\sc Doucet}, A \& {\sc Singh}, S. S.~(2011). Particle approximations of the score and observed information matrix in state space models with application to parameter estimation. \emph{Biometrika}, {\bf 98}, 65--80. 

\bibitem{rebs}
{\sc Rebeschini}, P. \& {\sc Van Handel}, R.~(2015). Can local particle filters beat the curse of dimensionality? \emph{Ann. Appl. Probab.} (to appear).

\bibitem{rubin}
{\sc Rubin}, D.~(1988). Using the SIR algorithm to simulate posterior distributions. \emph{Bayesian statistics}  {\bf 3}, 395--402. .

\bibitem{verge}
{\sc Verg\'e}, C., {\sc Dubarry}, C.,  {\sc Del Moral}, P. \& {\sc Moulines}, \'E.~(2014). 
On parallel implementation of Sequential Monte Carlo methods: the island particle model. \emph{Stat. Comp.} (to appear).

\bibitem{wang}
{\sc Wang}, J., {\sc Jasra}, A., \& {\sc De Iorio}, M.~(2014). Computational methods for a class of network models. \emph{J. Comp. Biol.},  {\bf 21}, 141-161.


\end{thebibliography}
\end{document}